\theoremstyle{plain}
\newtheorem{theorem}{Theorem}
\newtheorem{theoremsup}{Theorem}
\theoremstyle{definition}
\theoremstyle{remark}
\definecolor{olive}{rgb}{0.5, 0.5, 0.0}
\definecolor{maroon}{rgb}{0.69, 0.19, 0.38}
\definecolor{celestialblue}{rgb}{0.29, 0.59, 0.82}
\definecolor{darkgreen}{rgb}{0.0, 0.6, 0.0}
\definecolor{grey}{rgb}{0.5,0.5,0.5}
\definecolor{darkblue}{rgb}{0.19, 0.19, 0.62}
\definecolor{silver}{rgb}{0.7,0.7,0.7}
\definecolor{darkcyan}{rgb}{0.0, 0.55, 0.55}
\definecolor{burntorange}{HTML}{F4511E}
\newenvironment{algohighlight}{%
  \begin{tcolorbox}[
    colback=yellow!10,
    colframe=yellow!50!black,
    boxrule=0.4pt,
    sharp corners,
    breakable,
    left=2pt,
    right=2pt,
    top=3pt,
    bottom=3pt,
    width=\dimexpr\linewidth\relax,
    boxsep=-2.5pt,
  ]
}{%
  \end{tcolorbox}
}
\newcommand{\boldtablep}[1]{\textbf{\textcolor{red!80!black}{#1}}}
\newcommand{\boldtable}[1]{\cellcolor{blue!10!white}\textbf{\textcolor{red!80!black}{#1}}}
\newcommand{\undertable}[1]{\underline{\textcolor{green!70!black}{#1}}}
\def\defn{\,\coloneqq\,}
\newcommand{\xb}{\bm{x}}
\newcommand{\sbm}{\bm{s}}
\newcommand{\yb}{\bm{y}}
\newcommand{\nb}{\bm{n}}
\newcommand{\Ib}{\bm{I}}
\newcommand{\Ab}{\bm{A}}
\newcommand{\eb}{\bm{e}}
\newcommand{\wb}{\bm{w}}
\newcommand{\E}{\mathbb{E}}
\newcommand{\R}{\mathbb{R}}
\newcommand{\argmin}{\mathsf{arg\,min}}
\newcommand{\Dsf}{\mathsf{D}}
\newcommand{\Rsf}{\mathsf{R}}
\newcommand{\Gsf}{\mathsf{G}}
\def\argmin{\mathop{\mathsf{arg\,min}}} 
\newcommand{\boldzero}{\mathbf{0}}
\newcommand{\Ncal}{\mathcal{N}}
\newcommand{\Rcal}{\mathcal{R}}
\newcommand{\Fcal}{\mathcal{F}}
\newcommand{\diff}{\mathrm{d}}
\newcommand{\dd}{\boldsymbol{d}}
\newcommand{\thetabm}{{\bm{\theta}}}
\title{Diff-Unfolding: A Model-Based Score Learning Framework for Inverse Problems}
\author{%
  Yuanhao~Wang \\
  Washington University in St.~Louis\\
  \texttt{yuanhao@wustl.edu} \\
  \And
  Shirin~Shoushtari \\
  Washington University in St.~Louis\\
  \texttt{s.shirin@wustl.edu} \\
  \And
  Ulugbek~S.~Kamilov \\
  Washington University in St.~Louis\\
  \texttt{kamilov@wustl.edu} \\
}
\begin{document}
	
	\maketitle
	
	
	
	
	
	
\begin{abstract}
Diffusion models are extensively used for modeling image priors for inverse problems. We introduce \emph{Diff-Unfolding}, a principled framework for learning posterior score functions of \emph{conditional diffusion models} by explicitly incorporating the physical measurement operator into a modular network architecture. Diff-Unfolding formulates posterior score learning as the training of an unrolled optimization scheme, where the measurement model is decoupled from the learned image prior. This design allows our method to generalize across inverse problems at inference time by simply replacing the forward operator without retraining. We theoretically justify our unrolling approach by showing that the posterior score can be derived from a composite model-based optimization formulation.  Extensive experiments on image restoration and accelerated MRI show that Diff-Unfolding achieves state-of-the-art performance, improving PSNR by up to 2 dB and reducing LPIPS by $22.7\%$, while being both compact (47M parameters) and efficient (0.72 seconds per $256 \times 256$ image). An optimized C++/LibTorch implementation further reduces inference time to 0.63 seconds, underscoring the practicality of our approach.
\end{abstract}

\section{Introduction}
\label{sec:intro}

Many problems in computational imaging, biomedical imaging, and computer vision can be posed as inverse problems, where the goal is to recover a target image from its noisy and incomplete measurement. Inverse problems are typically ill-posed, necessitating the use of prior information to ensure accurate recovery of the target image. Traditional methods for solving inverse problems rely on iterative optimization with handcrafted priors~\cite{burger2018variational, parikh2014proximal,heide2014flexisp}, while more recent methods leverage data-driven priors obtained via deep learning (DL)~\cite{zhang2019residual,zamir2020cycleisp,qian2022rethinking}. 


\begin{figure*}[t]
  \begin{center}
    \begin{overpic}[width=\columnwidth]{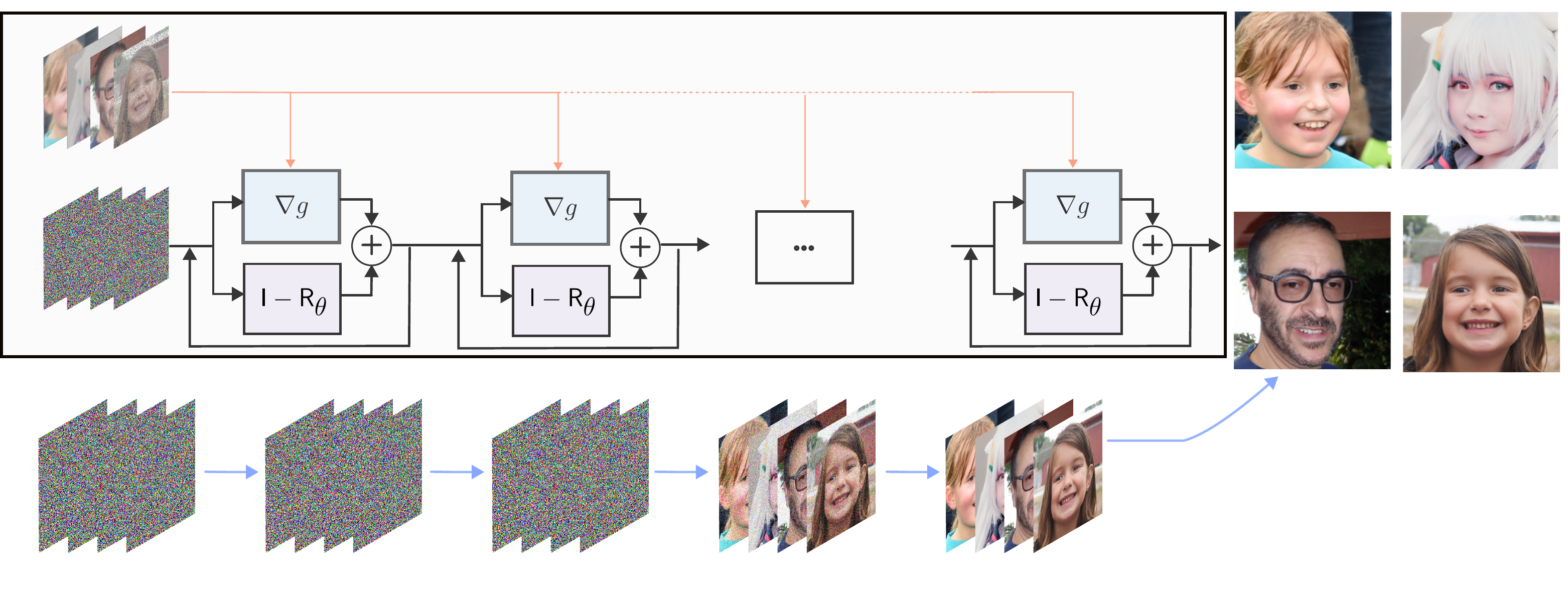}
    \put(6,17){{$\xb_{T}$}}
      \put(6,27){{$\yb$}}
      \put(61, 34) {\small Diff-Unfolding}
      \put(79.2, 25.5) {\tiny Deblur (Gauss)}
    \put(89.9, 25.5) {\tiny Deblur (aniso)}
      \put(80.8, 12.7) {\tiny Inpainting}
      \put(89.5, 12.7) {\tiny Super-resolution}
    \put(15, 32.5){\small \textcolor{burntorange}{Measurements}}
      \put(-0.5,7){$\xb_t$}
    \put(6,0.7){\small $T$}
      \put(20,0.7){\small$T-1$}
      \put(34,0.7){\small$T-2$}
        \put(41,0.7){\small$\cdots$}
      \put(48,0.7){\small$i$}
        \put(55,0.7){\small$\cdots$}
      \put(62,0.7){\small$1$}
        \put(88,0.7){\small$0$}
        
      \put(34,-2.5){\small Sampling Step}
    \end{overpic}
          \vspace{0.1in}
    \caption{\emph{The proposed Diff-Unfolding framework uses a deep unfolding (DU) network as a conditional score function within a diffusion model, enabling explicit separation of the measurement model $\nabla g$ from the learned denoiser $\Rsf_{\thetabm}$ within the architecture. This explicit separation enables direct application of the conditional score function to new inverse problems without retraining.}}
    \label{fig:framework}
  \end{center}
\end{figure*}

Diffusion models are powerful generative models that achieve state-of-the-art performance in image and video synthesis~\cite{ho2020denoising, song2020denoising, dhariwal2021diffusion,  karras2022elucidating}. These models generate a sample $\xb$ from the target distribution $p(\xb)$ by progressively denoising an initial Gaussian noise $\xb_T$ through the  numerical solution of stochastic differential equations (SDEs) or discrete Markov chains. During training, DMs learn the score functions $\nabla \log p(\xb_t)$, with $t \in \{0, \cdots, T\}$,  of intermediate noisy distributions, enabling approximate sampling from the distribution $p(\xb)$ at $t = 0$. 

The ability of DMs to capture complex image distributions makes them effective as priors for imaging inverse problems (see a recent review~\cite{daras2024survey}). Two main categories of DMs used in this context are \emph{Measurement-Guided Unconditional DMs (UDMs)} and \emph{Conditional DMs (CDMs)}. UDMs use pre-trained unconditional DMs and incorporate measurements only at inference time by approximating the posterior score as $\nabla\log p(\xb_t \,|\, \yb) = \nabla \log p(\yb \,|\, \xb_t) + \nabla \log p(\xb_t)$, where the likelihood term $\nabla \log p(\yb \,|\, \xb_t)$ acts as a guidance signal~\cite{chung2023diffusion, song2023pseudoinverse, zhu2023denoising, wu2024principled}. In contrast, CDMs are trained to directly estimate the conditional score $\nabla\log p(\xb_t \,|\, \yb)$ by incorporating the degraded observation $\yb$ as an input to the model. Conditioning can be implemented through various mechanisms such as input concatenation, cross-attention, or adapter modules~\cite{song2021scorebased, xie2022measurement,LI202247,korkmaz2023self}.

While DM-based inverse solvers show strong empirical performance, they also face technical limitations. A key issue in UDMs is the approximation of the posterior score, which can introduce errors and degrade reconstruction quality. CDMs, on the other hand, are typically specialized for specific inverse problems, limiting their flexibility. Moreover, their reliance on task-specific neural networks reduces interpretability by obscuring how measurements influence the final output. This lack of transparency is particularly concerning given the high computational cost of training DMs.


We present \emph{Diff-Unfolding}, a new general and principled framework learning the posterior score function by systematically integrating the measurement operator into the neural network architecture. Our approach builds on \emph{deep unfolding (DU)}~\cite{ongie2020deep,wen2023physics} to design a modular architecture in which the data-fidelity term is decoupled from the learned prior. This separation enables flexible adaptation to diverse measurement operators without retraining the entire model. We present novel theoretical and numerical results that justify our approach for solving inverse problems. Theoretically, we show that the posterior score can be expressed as a composite optimization problem, with one term capturing the measurement model and the other representing the prior. This interpretation offers a principled foundation for using MBDL to learn the posterior score. Empirically, Diff-Unfolding achieves state-of-the-art results on a range of inverse problems, while being significantly more efficient. Compared to CDMs, it reduces parameter count by $64\%$ and inference time by $50\%$. Compared to DPS, it uses $50\%$ fewer parameters and achieves over $100\times$ faster inference.

\section{Background}
\label{sec:bg}
\textbf{Inverse Problems.} Consider the recovery of the target image $\xb \in \R^n$ from the measurements
\begin{equation}\label{eq:measmodel}
    \yb = \Ab \xb + \eb, 
\end{equation}
where $\Ab \in \R^{m\times n}$ is a measurement operator, and $\eb \in \R^n$ denotes additive white Gaussian noise (AWGN). A commont approach for recovering $\xb$ from $\yb$ is formulating it as an optimization problem
\begin{equation}\label{eq:minim}
	\widehat{\xb} \in \underset{\xb \in \R^n}{\mathsf{arg\,min}} f(\xb) \quad\text{with}\quad f(\xb) = g(\xb) + h(\xb) \quad\text{and}\quad g(\xb) = \frac{1}{2}\|\yb - \Ab\xb\|_2^2, 
\end{equation}
where $g$ is the data-fidelity term that ensures consistency with the measurements and $h$ is the regularization term that encodes prior knowledge on $\xb$. It is common to solve~\eqref{eq:minim} using proximal optimization methods such as ADMM~\cite{boyd2011distributed}, HQS~\cite{he2013half}, and PGD~\cite{beck2009fast}. 


\textbf{Deep Unfolding.} Traditional optimization algorithms require manual parameter tuning, often leading to suboptimal performance. DU is a widely-used MBDL framework that addresses these limitations by reformulating iterative optimization as a trainable neural network, where each iteration becomes a network layer, allowing parameters to be learned in a data-driven fashion~\cite{sun2016deep,yaman2020self,liu2021stochastic,hammernik2018learning}. One widely-used DU architecture combines the gradient descent on $g$ with the residual of a deep network~\cite{aggarwal2018modl,monga2021algorithm,shlezinger2023model}
\begin{equation}
    \label{eq:deep_unfolding}
    \xb^{k+1} = \xb^{k} - \gamma^k \left(\nabla g(\xb^{k}) + \tau^k \left(\xb^{k} - \Rsf_{\thetabm_k}(\xb^{k})\right) \right),
\end{equation}
where $k$ indexes the iteration---interpreted as a  ``layer" in the unfolded network---while $\gamma_k$  and  $\tau_k$  are learnable parameters. $\Rsf_{\theta_j}$ denotes the regularization network, which can either be fixed or learned separately for each iterations. 

\textbf{Diffusion Models.}
DMs are state-of-the-art generative models for sampling from a target distribution~\cite{ho2020denoising, song2021scorebased, dhariwal2021diffusion, karras2022elucidating}. They consist of a forward process that gradually corrupts a data sample $\xb \sim p(\xb)$ by adding Gaussian noise over $T$ steps, following a predefined schedule with standard deviation $\sigma_t$, where $\sigma_0 \approx 0$, $\sigma_T = \sigma_{\mathsf{max}}$, and $\xb_T$ is drawn from a Gaussian distribution $p(\xb_T)$. 

The forward process can be formulated as a Stochastic Differential Equation (SDE)~\cite{song2021scorebased}: $\diff \xb_t = \sqrt{2\dot\sigma_t\sigma_t } \diff \wb_t$, where $\dot\sigma_t$ is the time-derivative of $\sigma_t$ and $\wb_t$ is a standard Wiener process. To generate samples, diffusion models learn to approximate the reverse process by training a neural network $\sbm_{\theta} (\xb_t, \sigma_t)$ to estimate the score function of the intermediate noisy distributions $\sbm_{\theta} (\xb_t, \sigma_t) \approx \nabla_{\xb} \log p_{\sigma_t}( \xb_t )$ as proposed in score matching approaches~\cite{hyvarinen2005estimation}. Once trained, sampling from $p(\xb_0)$ can be performed by solving the reverse-time SDE
\begin{equation}
	\label{eq:ode}
	\diff \xb = -\dot\sigma_t\sigma_t\nabla_{\xb} \log p_{\sigma_t}( \xb_t ) ~\diff t +\sqrt{2\dot\sigma_t\sigma_t } \diff \wb_t. 
\end{equation}
Alternatively, one can also use a deterministic formulation known as the Probability Flow ODE, which shares the same marginals as the SDE but yields a deterministic sampling path. This ODE-based approach enables more efficient numerical solvers while preserving sample quality~\cite{karras2022elucidating, song2021scorebased}.

\textbf{DMs for Inverse Problems.} There are two primary approaches for leveraging DMs to solve inverse problems. Unconditional DMs use Bayes' rule to perform inference-time guidance~\cite{chung2023diffusion, boys2023tweedie, cardoso2023monte, mengdiffusion, rout2024beyond, song2023pseudoinverse, wu2024principled, zhu2023denoising}. These methods typically compute gradients of the log-likelihood $\nabla \log p(\yb \,|\,\xb_t)$, which is often approximated by $\nabla \log p(\yb \,|\,\hat{\xb}_{0|t})$,  where $\hat{\xb}_{0|t}$ is an estimate of the clean image. Conditional DMs (CDMs), in contrast, aim to directly learn the posterior score function $\nabla \log p(\xb_t | \yb)$ by training a neural network to denoise a noisy sample $\xb_t$ conditioned on the observation $\yb$~\cite{song2021scorebased, xie2022measurement, LI202247,korkmaz2023self,choi2021ilvr, saharia2022image}. While often effective, CDMs entangle the measurement model and image prior in a way that complicates generalization to new inverse problems without retraining. Recent work~\cite{korkmaz2023self} introduced SSDiffRecon, an unrolled CDM architecture that combines cross-attention transformers with data-consistency modules for self-supervised MRI reconstruction. Unlike SSDiffRecon, which is tailored to a specific application, our approach builds on a new theoretical framework for posterior score learning, enabling principled and modular architectures applicable across a broad range of inverse problems.

\textbf{Our contributions are:} \textbf{(1)} A new theory showing that the posterior score function can be formulated as a composite optimization problem, with one term representing the physical measurement model and the other representing the image prior. This formulation provides a principled justification for learning the posterior score via deep unfolding. \textbf{(2)} A new modular and principled framework for training posterior score functions by explicitly integrating the measurement model into the network architecture. This design disentangles the roles of data fidelity and the learned prior, enabling efficient adaptation to diverse inverse problems without retraining the entire model. \textbf{(3)} Comprehensive empirical validation showing that Diff-Unfolding achieves state-of-the-art performance across multiple inverse problems while remaining highly efficient. Compared to CDMs, Diff-Unfolding reduces parameter count by up to $64\%$ and inference time by $50\%$. Compared to DPS, it uses $50\%$ fewer parameters and achieves over $100\times$ faster inference.

\section{Model-based Posterior Score Learning} 
\label{sec:methods}

We now present our Diff-Unfolding framework, which enhance the traditional CDMs methodology using a \emph{deep unfolding} architecture. We start by introducing a theoretical result that justifies our design, showing that the conditional score function can be formulated as a composite optimization problem. We then describe the technical details of the Diff-Unfolding architecture and its implementation.

\subsection{Conditional Score Learning}

In the conventional CDM framework, the task is to sample from the posterior distribution $p(\xb|\yb)$, where $\yb$ is produced by the forward measurement operator in~\eqref{eq:measmodel} for an unknown image $\xb\sim p(\xb)$. The standard forward-diffusion process then corrupts $\xb$ step-by-step by injecting Gaussian noise, yielding a sequence of increasingly degraded images.
\begin{equation}\label{eq:samplingNoisy}
    \xb_t  = \xb + \sigma_t \nb, \quad \nb \sim \Ncal(\bm{0}, \Ib), 
\end{equation}
where $t \in \{0 \cdots, T\}$ represents the diffusion time step and $\sigma_t \in \{\sigma_0, \cdots, \sigma_T\}$ denotes the associated noise level sampled from a predefined noise schedule. The CDM framework integrates the measurement $\yb$ directly into the denoising network $\Dsf_\theta$ aiming to minimize the mean-squared error (MSE) loss function for conditional sampling.
\begin{equation}\label{eq:lossMMSE}
    \E_{\yb \sim p(\yb | \xb), \xb \sim p(\xb), t \sim U[0, T]} \left [ \| \Dsf_\theta (\xb_t, \yb) - \xb\|_2^2\right ],
\end{equation}
where the target image $\xb$ and its measurement $\yb$ form paired samples. The expectation in~\eqref{eq:lossMMSE} is computed over all diffusion steps $t \in [0, T]$, target images $\xb \sim p(\xb)$, and measurements $\yb \sim p(\yb | \xb)$, with conditional likelihood given by the measurement model~\eqref{eq:measmodel}. The minimizer of the MSE loss~\eqref{eq:lossMMSE} is the minimum mean-squared error (MMSE) estimator, which coincides with the conditional expectation of the target given the measurement.
\begin{equation}
    \Dsf_{\theta^\ast}(\xb_t, \yb) = \E[\xb \,|\, \xb_t,\yb].
    \label{Eq:MMSEEstimator}
\end{equation}
Once trained the denoiser $\Dsf_{\theta^\ast}$ can be directly used within the reverse diffusion process to generate samples from the conditional distribution~\cite{LI202247, xie2022measurement,saharia2022image,korkmaz2023self}. 

With the above context established, we now present our main theoretical result.

\begin{theorem}\label{thm:thm1}
    Suppose that the prior density $p(\xb)$ is non-degenerate on $\R^n$. Let $\Dsf_{\theta^\ast}(\xb_t,\yb)$ be the MMSE estimator defined in~\eqref{Eq:MMSEEstimator}.
    Then, the following statements hold for all $t \in \{0, \cdots, T\}$
    \begin{itemize}
        \item[(a)] The denoiser is a minimizer of the following composite optimization problem
        $$\Dsf_{\theta^\ast}( \xb_t,\yb) = \underset{\xb \in \R^n}{\argmin} \left \{ \frac{1}{2} \| \Ab \xb - \yb\|_2^2 +  h_t(\xb)\right \},$$
        for some regularizer $h_t: \R^n \to \R\cup\{+\infty\}$.

        \item[(b)] The denoiser can be related to the following posterior score function 
\begin{equation*}
     \Dsf_{\theta^\ast}(\xb_t,\yb) = \xb_t + \sigma_t^2 \nabla \log p (\xb_t \,|\, \yb),
\end{equation*}
where $p(\xb_t | \yb)$ is the conditional distribution of $\xb_t$ given $\yb$.
    \end{itemize}
\end{theorem}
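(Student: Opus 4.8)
I would prove part (b) first --- it is the classical conditional Tweedie/Miyasawa identity --- and then build part (a) on the same change-of-measure viewpoint.

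\emph{Part (b).} Since $\xb_t=\xb+\sigma_t\nb$ with $\nb$ drawn independently of $(\xb,\eb)$, the corrupted iterate $\xb_t$ and the measurement $\yb$ are conditionally independent given $\xb$, so $p(\xb_t\mid\yb)=\int\Ncal(\xb_t;\xb,\sigma_t^2\Ib)\,p(\xb\mid\yb)\,\diff\xb$ is a Gaussian mollification of the partial posterior $p(\xb\mid\yb)$, hence strictly positive and $C^\infty$ in $\xb_t$, which legitimizes differentiation under the integral. Using $\nabla_{\xb_t}\Ncal(\xb_t;\xb,\sigma_t^2\Ib)=-\sigma_t^{-2}(\xb_t-\xb)\Ncal(\xb_t;\xb,\sigma_t^2\Ib)$ together with Bayes' rule $p(\xb\mid\xb_t,\yb)=\Ncal(\xb_t;\xb,\sigma_t^2\Ib)\,p(\xb\mid\yb)/p(\xb_t\mid\yb)$ gives $\nabla_{\xb_t}\log p(\xb_t\mid\yb)=\sigma_t^{-2}\big(\E[\xb\mid\xb_t,\yb]-\xb_t\big)$; rearranging and substituting $\Dsf_{\theta^\ast}(\xb_t,\yb)=\E[\xb\mid\xb_t,\yb]$ from~\eqref{Eq:MMSEEstimator} yields (b).

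\emph{Part (a).} Using the same conditional independence, $p(\xb\mid\xb_t,\yb)\propto p(\yb\mid\xb)\,p(\xb\mid\xb_t)\propto\exp(-\tfrac12\|\Ab\xb-\yb\|_2^2)\,\exp(-\tfrac{1}{2\sigma_t^2}\|\xb_t-\xb\|_2^2)\,p(\xb)$. I would collect the $\yb$-free factors into a finite (sub-probability) base measure $\mu_t(\diff\xb)\propto\exp\!\big(-\tfrac12\|\Ab\xb\|_2^2-\tfrac{1}{2\sigma_t^2}\|\xb_t-\xb\|_2^2\big)\,p(\xb)\,\diff\xb$ and, expanding $\|\Ab\xb-\yb\|_2^2$, write the posterior as an exponential tilt of $\mu_t$: $p(\xb\mid\xb_t,\yb)\propto\exp(\langle\xb,\Ab^\Tsf\yb\rangle)\,\mu_t(\diff\xb)$. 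Let $\Psi_t(\wb)\defn\log\int\exp(\langle\xb,\wb\rangle)\,\mu_t(\diff\xb)$ be the associated log-partition (cumulant-generating) function. The Gaussian factor in $\mu_t$ makes $\int\exp(\langle\xb,\wb\rangle)\,\mu_t(\diff\xb)$ finite for every $\wb\in\R^n$, so $\Psi_t$ is finite, convex, and $C^\infty$ on $\R^n$, with $\nabla\Psi_t(\wb)$ equal to the mean of the tilted measure. Evaluating at $\wb=\Ab^\Tsf\yb$ identifies that tilted measure with $p(\cdot\mid\xb_t,\yb)$, so $\nabla\Psi_t(\Ab^\Tsf\yb)=\E[\xb\mid\xb_t,\yb]=\Dsf_{\theta^\ast}(\xb_t,\yb)$; non-degeneracy of $p(\xb)$ makes the posterior density strictly positive on $\R^n$ with finite second moments, hence $\nabla^2\Psi_t\succ0$ and $\Psi_t$ is strictly convex. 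Now set $h_t(\xb)\defn\Psi_t^{\ast}(\xb)-\tfrac12\|\Ab\xb\|_2^2$, with $\Psi_t^{\ast}$ the convex conjugate of $\Psi_t$ (proper, l.s.c.\ convex, since $\Psi_t$ is finite and $C^1$). Expanding and discarding the $\xb$-independent term $\tfrac12\|\yb\|_2^2$, the composite objective equals $\Psi_t^{\ast}(\xb)-\langle\xb,\Ab^\Tsf\yb\rangle+\text{const}$, whose minimizers are exactly the $\xb$ with $\Ab^\Tsf\yb\in\partial\Psi_t^{\ast}(\xb)$; by conjugate-subgradient inversion this is equivalent to $\xb\in\partial\Psi_t(\Ab^\Tsf\yb)=\{\nabla\Psi_t(\Ab^\Tsf\yb)\}$. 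Hence $\argmin_{\xb}\{\tfrac12\|\Ab\xb-\yb\|_2^2+h_t(\xb)\}=\nabla\Psi_t(\Ab^\Tsf\yb)=\Dsf_{\theta^\ast}(\xb_t,\yb)$, which is (a). (When $\Ab=\Ib$ this reduces to the known fact that a Gaussian-denoising MMSE estimator is a proximal operator; the above is its extension to an arbitrary linear forward operator.)

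\textbf{Main obstacle.} The delicate part is entirely in (a): one must verify that the exponential-tilting picture is non-degenerate --- that all exponential moments of $\mu_t$ are finite, so $\Psi_t$ is finite, smooth, and strictly convex on all of $\R^n$; that $\Psi_t^{\ast}$ is proper, so $h_t$ is a genuine $\R\cup\{+\infty\}$-valued regularizer; and that the stationarity set $\{\xb:\Ab^\Tsf\yb\in\partial\Psi_t^{\ast}(\xb)\}$ collapses to the singleton $\{\nabla\Psi_t(\Ab^\Tsf\yb)\}$. Each of these rests on the combination of the non-degeneracy hypothesis on $p(\xb)$ with the Gaussian factors contributed by the measurement noise in~\eqref{eq:measmodel} and the diffusion noise in~\eqref{eq:samplingNoisy}; the degenerate case $\sigma_t=0$ (only at $t=0$, where $\sigma_0\approx0$) should be treated as a trivial limiting case. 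Everything else --- the conditional independence $\xb_t\perp\yb\mid\xb$, differentiation under the integral, expanding the quadratic, and Fenchel duality --- is routine.
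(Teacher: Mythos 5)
Your proof is correct. Part (b) coincides with the paper's own argument: conditional independence of $\xb_t$ and $\yb$ given $\xb$, differentiation of the Gaussian mollification of $p(\xb\mid\yb)$ under the integral, and Bayes' rule. Part (a), however, takes a genuinely different route. The paper first rewrites $\E[\xb\mid\xb_t,\yb]$ as the MMSE estimator of $\xb$ from $\yb=\Ab\xb+\eb$ under the modified prior $p(\xb\mid\xb_t)$ and then invokes Theorem~2 of~\cite{priorGribonval} as a black box to assert the existence of a penalized least-squares representation with a regularizer adapted to that prior. You instead construct the regularizer explicitly: you write the posterior as an exponential tilt of the base measure $\mu_t$ with tilting parameter $\Ab^{\Tsf}\yb$, identify $\E[\xb\mid\xb_t,\yb]$ with $\nabla\Psi_t(\Ab^{\Tsf}\yb)$ for the log-partition function $\Psi_t$, and set $h_t=\Psi_t^{\ast}-\tfrac12\|\Ab\cdot\|_2^2$, so that the composite objective becomes the convex function $\Psi_t^{\ast}(\xb)-\langle\xb,\Ab^{\Tsf}\yb\rangle+\mathrm{const}$ whose unique minimizer is $\nabla\Psi_t(\Ab^{\Tsf}\yb)$ by conjugate-subgradient inversion. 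This is essentially a self-contained re-derivation of the Gribonval-type result specialized to the present setting; it buys an explicit formula for $h_t$ and makes transparent exactly where each hypothesis enters (non-degeneracy of $p(\xb)$ for strict convexity of $\Psi_t$, the Gaussian diffusion factor for finiteness of all exponential moments, $\sigma_t>0$), at the cost of some convex-analysis overhead, and it correctly notes that the theorem only requires $h_t$ itself to be extended-real-valued, not convex. Two minor observations: both your argument and the paper's implicitly normalize the measurement noise so that $p(\yb\mid\xb)\propto\exp(-\tfrac12\|\Ab\xb-\yb\|_2^2)$ matches the stated data-fidelity term, and your deferral of the $\sigma_t=0$ endpoint to a limiting case is no less careful than the paper, which does not address it at all.
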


The proof is provided in Appendix~\ref{app:proof}. Theorem~\ref{thm:thm1} shows that the MMSE estimator in the CDM framework corresponds to the solution of a composite optimization problem with two components: a data-fidelity term encoding the measurement operator and a regularizer $h_t$ capturing the prior knowledge. Thus, computing $\Dsf_{\theta^\ast}$ reduces to minimizing a \emph{regularized least-squares objective}, which can be solved via iterative optimization. 

While the regularizer lacks a closed-form expression, Diff-Unfolding uses deep unfolding to translate iterative optimization steps into a learnable neural architecture, enabling effective inference of its implicit structure. Theorem~\ref{thm:thm1} further establishes a direct connection between the MMSE denoiser—realizable via Diff-Unfolding—and the posterior score function, providing a rigorous justification for its use in the CDM setting.



\subsection{Diff-Unfolding Training and Sampling}

Theorem~\ref{thm:thm1} implies that, given an explicit form for the regularizer $h_t$, the conditional MMSE estimator for CDM can be computed by solving the following composite optimization problem
\begin{equation}
\Dsf_{\theta^\ast}( \xb_t,\yb) = \argmin_{\xb \in \R^n} \left\{ g(\xb) + h_t(\xb) \right\} \quad\text{with}\quad g(\xb) = \frac{1}{2}\|\yb-\Ab\xb\|_2^2.
\end{equation}
We address this optimization problem using deep unfolding, replacing the unknown regularizer $h_t$ with a learnable neural network. A key advantage of this approach is the explicit integration of the measurement model $g(\xb)$ into the architecture, removing the need to learn the measurement operator and improving both interpretability and efficiency.

Algorithms~\ref{alg:train} and~\ref{alg:sampling} outline our training and sampling procedures. At each diffusion step $t \in \{0, \cdots, T\}$ defined by eq.~\eqref{eq:samplingNoisy}, the goal is to learn the MMSE estimator from eq.\eqref{Eq:MMSEEstimator}. Guided Guided by our theoretical framework, we express this estimator as an explicit iterative optimization procedure
\begin{equation}
    \xb_t^{k+1} = \xb_t^{k} - \gamma_t^k \Gsf_{\theta_k}(\xb_t^k)\quad\text{with}\quad \Gsf_{\theta_k}(\xb_t^k) \defn \nabla g(\xb_t^{k}) + \tau_t^k \left(\xb_t^{k} - \Rsf_{\theta_k}(\xb_t^{k})\right),
    \label{Eq:UnfoldingIteration}
\end{equation}
where $g$ is the data-fidelity term and $\Rsf_\theta$ is a learnable neural network representing the regularizer $h_t$. The indices $t \in [0,T]$ and $k \in [1, K]$ in \eqref{Eq:UnfoldingIteration} denote the diffusion step and unfolding iteration, respectively. The parameters $\gamma_t^k$ and $\tau_t^k$ are learnable step-size and regularization parameters for each diffusion step $t$ and iteration $k$. The unfolding iteration is initialized by setting $\xb_t^0 = \xb_t$, corresponding to the noisy image in the diffusion step $t$. After $K$ iterations,the final output  $\xb_t^K$, is evaluated against the target sample $\xb$ via the mean squared error (MSE) loss. As a result, $\Dsf_{\theta^\ast}(\xb_t, \yb)$ approximates the conditional MMSE estimator $\E[\xb \,|\, \xb_t,\yb]$.

\begin{figure*}[t]
\centering
\scriptsize
\begin{minipage}[t]{0.49\textwidth}
\scriptsize
    \makeatletter
    \renewcommand{\alglinenumber}[1]{\small #1}
    \makeatother
    \begin{algorithm}[H]
    \small 
    \caption{Diff-Unfolding Training}
    \begin{algorithmic}[1]
            \State{\bfseries Repeat}
            \State {\quad$\xb \sim p(\xb),\quad \nb \sim  \Ncal(\bm{0}, \bm{I}) $}
            \State{\quad $t \sim  \text{Uniform}([1, T])$}
            \State {\quad$\xb_t = \xb + \sigma_t\nb$}
            \State {\quad $\tau_t^k=f_{\tau}(t,k;\theta_\tau)$, $\lambda_t^k=f_{\lambda}(t,k;\theta_\tau)$}
                \begin{algohighlight}
            \State{\bfseries \quad for {$k = 0,\dots, K-1$}}
                \State{\quad\quad$\xb_t^{k+1} = \xb_t^{k} - \gamma_t^k \Gsf_{\theta_k}(\xb_t^k)$ \textcolor{blue}{\quad (see  eq.~\eqref{Eq:UnfoldingIteration})}}
                \State{\bfseries \quad end for}			
                \end{algohighlight}
            \State{\quad {Take gradient descent on 
            $\mathcal{L}(\theta) = \|\xb_{t}^K -\xb\|_2^2$}}
            \State{\bfseries{until} converged}
    \vspace{0.2em}
    \end{algorithmic}
    \label{alg:train}
    \end{algorithm}
\end{minipage}
\hfill
\begin{minipage}[t]{0.49\textwidth}
    \makeatletter
    \renewcommand{\alglinenumber}[1]{\small #1}
    \makeatother
    \begin{algorithm}[H]
    \small 
    \caption{Diff-Unfolding Sampling}
    \begin{algorithmic}[1]
        \State{\bfseries Input: }{$\yb, \Ab, \{\sigma(t)\}$}
        \State{{\bf init} $\xb_T \sim \mathcal{N} \big( \boldzero, ~\sigma^2(T)  ~\Ib \big)$}
        \State{\bfseries for $t = T, \dots, 0$ do}
        \State {\quad $\tau_t^k=f_{\tau}(t,k;\theta_\tau)$, $\lambda_t^k=f_{\lambda}(t,k;\theta_\tau)$}
        \begin{algohighlight}
            \State{\quad \bfseries for $k = 0,\dots, K-1$ do}
            \State{\quad\quad$\xb_t^{k+1} = \xb_t^{k} - \gamma_t^k \Gsf_{\theta_k}(\xb_t^k)$ \textcolor{blue}{\quad (see  eq.~\eqref{Eq:UnfoldingIteration})}}
            \State{\quad \bfseries end for}
        \end{algohighlight}
        \State{\quad$\xb_{t-1}\sim p(\xb_{t-1}\vert \xb_t, \xb_{t}^{K}, t)$ (see~(81) in~\cite{karras2022elucidating} )}
        \State{\bfseries end for}
        \State{\textbf{return} $\xb_0$}
    \end{algorithmic}
    \label{alg:sampling}
    \end{algorithm}
\end{minipage}
\end{figure*}

We parameterize $\tau_t^k$ and $\lambda_t^k$ using continuous, learnable weighting functions defined as $\tau_t^k=f_{\tau}(t,k;\theta_\tau)$ and $\lambda_t^k=f_{\lambda}(t,k;\theta_\tau)$, where $f_{\tau}$ and  $f_{\lambda}$ are lightweight multilayer perceptrons (MLPs) trained to generate optimal parameters for each diffusion step $t$ and unfolding iteration $k$. As $t$ increases (i.e., at higher noise levels), $\tau_t^k$ increases, placing geater emphasis on the prior. Conversely, for smaller $t$ (lower noise levels), $\tau_t^k$ decreases, prioritizing data consistency and fidelity to the measurements. This adaptive weighting enables the model to transition smoothly from prior-driven denoising in early steps to measurement-driven refinement as noise diminishes. We visualize and further discuss this behaviour in Supplment~\ref{sec:ex_weighting}.

\section{Experiments}
\label{sec:exp}


We conduct extensive experiments to evaluate the efficacy, robustness, and computational efficiency of Diff-Unfolding on two key applications: image restoration and accelerated MRI reconstruction under various degradation scenarios. Our assessment includes quantitative metrics---Peak Signal-to-Noise Ratio (PSNR), Structural Similarity Index Measure (SSIM), Learned Perceptual Image Patch Similarity (LPIPS)---alongside visual comparisons to provide a thorough evaluation of our approach.

\subsection{Experiments setup}

\paragraph{Dataset preparation.} We use the FFHQ dataset~\cite{Karras2018ASG} for image restoration. Images indexed 1,000–69,999 are used for training, while 100 images from the first 1,000 images are reserved for testing.  And we evaluate our approach on the fastMRI Multicoil Knee dataset~\cite{Zbontar2018fastMRIAO} for MRI reconstruction. We train on the entire training set (973 scans) and test on three scans from the validation scans. FFHQ images are normalized to the range $\left[0,1\right]$, while fastMRI slices are normalized using their $99th$ percentile absolute value, as recommended in ~\cite{Aali2024AmbientDP}.

\paragraph{Tasks and Degradations.} For image restoration, we evaluate our method on three tasks: \textbf{(1) Deblurring:} using both isotropic and anisotropic Gaussian blur kernel; \textbf{(2) Super-resolution:} with scaling factors of $4\times$, $8\times$ and $16\times$; \textbf{(3) Inpainting:} using random dust-like binary masks patterns with missing probabilities of 0.2, 0.4, and 0.6. For MRI reconstruction, we evaluate three k-space sampling strategies: \textbf{(1) 1D uniform sampling:} $4\times$ and $8\times$ acceleration; \textbf{(2) 1D Gaussian sampling:} $4\times$ and $8\times$ acceleration; and \textbf{(3) 2D Gaussian sampling:} $8\times$ and $16\times$ acceleration. In all experiments, additive i.i.d. Gaussian noise is added to the measurements $\yb \sim \Ncal(0,\sigma_y^2\Ib)$ to simulate realistic measurement conditions.

\paragraph{Training and Testing configuration.} We train Diff-Unfolding using the Adam optimizer~\cite{kinga2015method} with a learning rate of $2\times 10^{-4}$. To promote robustness and generalization, training tasks are sampled randomly across iterations. For the loss function, we adopt an uncertainty-based multi-task learning framework inspired by~\cite{kendall2018multi, karras2024analyzing}, described in detail in~\cref{sec:un_learning}.
For FFHQ images, the measurement noise standard deviation $\sigma_y$ is drawn from the interval $\left[0,0.1\right]$. For fastMRI images, $\sigma_y$ is scaled proportionally to the magnitude of the complex-valued k-space data by a factor of $0.01$. An ablation study on the number of function evaluations (NFE) reveals that the best LPIPS performance is achieved at NFE=18; details are provided in~\cref{sec:abl_nfe}.We evaluated four diffusion models: VP-SDE \cite{ho2020denoising}, VE-SDE \cite{song2021scorebased}, iDDPM \cite{nichol2021improved}, and EDM \cite{karras2022elucidating}, with EDM serving our chosen implementation.

\subsection{Validation on FFHQ }
\begin{figure*}[ht]
	\begin{center}
		\centerline{\includegraphics[width=\columnwidth]{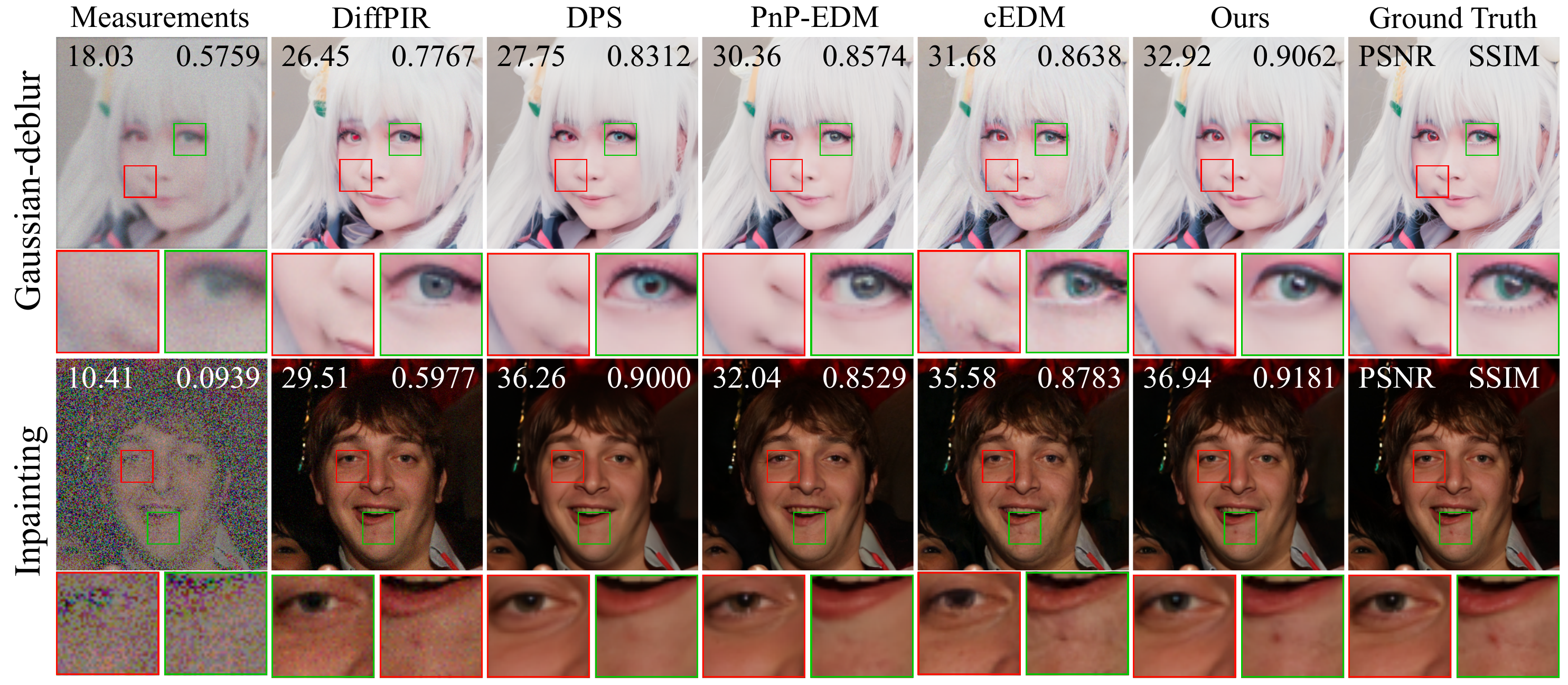}}
		\caption{Visual results for Gaussian deblurring (top) and random inpainting (bottom) with noise level $\sigma_y = 0.05$. Each image is annotated with its corresponding PSNR and SSIM scores. Key visual differences are highlighted under each image. Note the superior quantitative and visual performance achieved by Diff-Unfolding, demonstrating its state-of-the-art performance.}
            \vspace{-5pt}
		\label{fig:img_d_gaussian}
	\end{center}
    \vspace{-5pt}
\end{figure*}

\begin{figure*}[ht]
        \vspace{-5pt}
	\begin{center}
		\centerline{\includegraphics[width=\columnwidth]{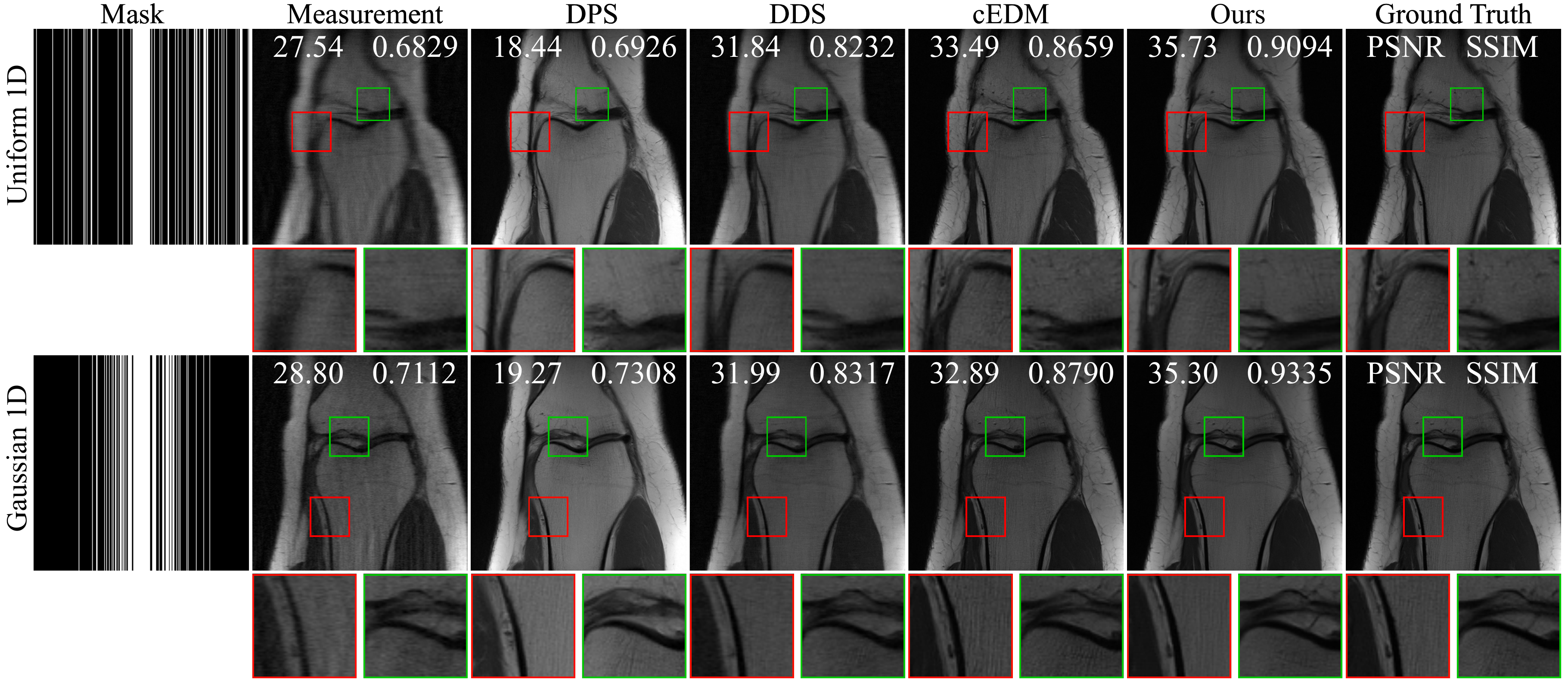}}
		\caption{Visual results for MRI reconstruction using Uniform 1D ($4\times$) and Gaussian 1D ($4\times$) sampling patterns at noise level $\sigma_y = 0.05$. Each image is annotated with PSNR and SSIM values. Key visual differences are highlighted under each image. Diff-Unfolding demonstrates state-of-the-art performance, both quantitatively and visually.}
		\label{fig:mri}
	\end{center}
    \vspace{-5pt}
\end{figure*}

\begin{table*}[!ht]
	\centering
	\setlength{\tabcolsep}{0.5mm}
	\begin{center}
		\scriptsize 
		\caption{Quantitative comparison on three noisy linear inverse problems for 100 FFHQ color test images. \boldtable{Bold:} best; \undertable{undertable:} second best.}
            \vspace{-5pt}

		\begin{minipage}{1\textwidth}
			\centering
                \makebox[\textwidth]{
			\scalebox{1}{ 
				\begin{tabularx}{1\textwidth}{c *{10}{c}} 
					\toprule
					&\multirow{2}{*}{Method} 
					& \multicolumn{3}{c}{Gaussian deblur} 
					& \multicolumn{3}{c}{Super-resolution (4$\times$)} 
					& \multicolumn{3}{c}{Inpainting} \\
					\cmidrule(lr){3-5} \cmidrule(lr){6-8} \cmidrule(lr){9-11}
					&& PSNR ($\uparrow$) & SSIM ($\uparrow$) & LPIPS ($\downarrow$) 
					& PSNR ($\uparrow$) & SSIM ($\uparrow$) & LPIPS ($\downarrow$) 
					& PSNR ($\uparrow$) & SSIM ($\uparrow$) & LPIPS ($\downarrow$) \\ 
					\midrule
					\multirow{7}{*}{\rotatebox{90}{noise = 0.03}} 
					&CDDB &9.92$\pm$0.99 & 0.1520$\pm$0.0705 & 0.7500 & 26.11$\pm$1.57 & 0.6759$\pm$0.0597 &  0.3061 & --- & --- & --- \\
					&DiffPIR &24.32$\pm$ 2.39 & 0.6978$\pm$0.0699 & 0.2806 & 26.48$\pm$ 1.95 & 0.7633$\pm$0.0447 & 0.2391 & 30.47$\pm$1.71 & 0.7910$\pm$0.0234 & 0.2082 \\
					&DPS &26.38$\pm$2.24 & 0.7708$\pm$0.0552 & 0.2324 & 27.58$\pm$2.22 & \undertable{0.8090$\pm$0.0481} & 0.1952 & 33.10$\pm$2.12 & 0.9106$\pm$0.0290 & 0.1562 \\
					&PnP-EDM &29.56$\pm$1.50 & 0.8375$\pm$0.0335 & 0.1854 & 27.46$\pm$1.92 & 0.7873$\pm$0.0476 & 0.2184 & 29.60$\pm$1.47 & 0.8463$\pm$0.0309 & 0.1930 \\
                    &cEDM  & \undertable{31.35$\pm$1.67} & \undertable{0.8678$\pm$0.0293} & \undertable{0.1413} & \undertable{28.22$\pm$2.25} & 0.8049$\pm$0.0462 & \undertable{0.1790} & \undertable{35.10$\pm$1.36} & \undertable{0.9335$\pm$0.0165} & \undertable{0.0953} \\
					\cmidrule(lr){2-11}
					&Ours & \boldtable{{32.50$\pm$1.77}} &\boldtable{0.8977$\pm$0.0243} & \boldtable{0.1294} & \boldtable{29.29$\pm$2.37} & \boldtable{0.8481$\pm$0.0414} & \boldtable{0.1595} & \boldtable{35.48$\pm$1.75} & \boldtable{0.9465$\pm$0.0128} & \boldtable{0.0810} \\


					\midrule
					\multirow{7}{*}{\rotatebox{90}{noise = 0.05}}	
					&CDDB &9.73$\pm$0.98 & 0.1234$\pm$0.0543 & 0.7758 & 23.19$\pm$1.07 & 0.4654$\pm$0.0560 & 0.4747 & --- & --- & --- \\
					&DiffPIR &23.85$\pm$2.32 & 0.6842$\pm$0.0698 & 0.2924 & 25.44$\pm$1.82 & 0.7335$\pm$0.0502 & 0.2639 & 28.09$\pm$1.24 & 0.6737$\pm$0.0351 & 0.2754 \\
					&DPS & 25.96$\pm$1.92 & 0.7570$\pm$0.0554 & 0.2438 & 27.59$\pm$2.20 & \undertable{0.8069$\pm$0.0489} & 0.1986 & 32.14$\pm$1.94 & 0.8976$\pm$0.0295 & 0.1674 \\
					&PnP-EDM & 28.93$\pm$1.57 & 0.8200$\pm$0.0375 & 0.1965 & 27.20$\pm$1.85 & 0.7795$\pm$0.0477 & 0.2260 & 29.56$\pm$1.44 &0.8456$\pm$0.0315  & 0.1931 \\
                    &cEDM & \undertable{29.99$\pm$1.64} & \undertable{0.8336$\pm$0.0322} & \undertable{0.1703} & \undertable{28.04$\pm$2.20} & 0.7984$\pm$0.0462 & \undertable{0.1876} & \undertable{33.81$\pm$1.25} & \undertable{0.9149$\pm$0.0183} & \undertable{0.1209} \\
					\cmidrule(lr){2-11}
					&Ours & \boldtable{31.27$\pm$1.76} & \boldtable{0.8752$\pm$0.0284} & \boldtable{0.1526} & \boldtable{29.16$\pm$2.32} & \boldtable{0.8437$\pm$0.0413} & \boldtable{0.1676} & \boldtable{34.23$\pm$1.59} & \boldtable{0.9306$\pm$0.0158} & \boldtable{0.1037} \\
					\bottomrule
				\end{tabularx}
				\label{tab:img}
			} 
            }
		\end{minipage}
	\end{center}
\end{table*}

\begin{table*}[!ht]
	\centering
	\setlength{\tabcolsep}{0.5mm}
	
	\scriptsize 
	\caption{Quantitative comparison on three noisy linear inverse problems for fastMRI test datasets. \boldtable{Bold:} best; \undertable{Underline:} second best.}
        \vspace{-5pt}
		\begin{minipage}{1\textwidth}
			\centering
                \makebox[\textwidth]{
			\scalebox{1}{ 
			\begin{tabularx}{1.02\textwidth}{c *{10}{c}} 
				\toprule
				&\multirow{2}{*}{  Method} 
				& \multicolumn{3}{c}{Uniform 1D $4
					\times$} 
				& \multicolumn{3}{c}{Gaussian 1D (4$\times$)} 
				& \multicolumn{3}{c}{Gaussian 2D (8 $\times$)} \\
				\cmidrule(lr){3-5} \cmidrule(lr){6-8} \cmidrule(lr){9-11}
				&& PSNR ($\uparrow$) & SSIM ($\uparrow$) & LPIPS ($\downarrow$) 
				& PSNR ($\uparrow$) & SSIM ($\uparrow$) & LPIPS ($\downarrow$) 
				& PSNR ($\uparrow$) & SSIM ($\uparrow$) & LPIPS ($\downarrow$) \\ 
				\midrule
				\multirow{5}{*}{\rotatebox{90}{noise free}} 
				&VarNet & 31.06$\pm$2.66 & \undertable{0.8606$\pm$ 0.0387} & 0.2524 & 33.14$\pm$2.68 & 0.8902$\pm$0.0297 & 0.2151 & 29.51$\pm$2.86 & 0.8480$\pm$0.0339 & 0.2633 \\
				&DPS & 17.92$\pm$5.07 & 0.6977$\pm$0.1607 & 0.2665 & 18.40$\pm$ 5.23 & 0.7489$\pm$0.1521 & 0.2296 & 18.31$\pm$5.00 & 0.7362$\pm$0.1556 & 0.2276 \\
				&DDS & 27.12$\pm$5.29 & 0.8590$\pm$0.0502 & 0.2628 & 31.71$\pm$ 4.41 & \undertable{0.
				9082$\pm$0.0343} & 0.1700 & \undertable{30.08$\pm$4.79} & \undertable{0.8883$\pm$0.0840 }& 0.2253 \\
				&cEDM & \undertable{31.68$\pm$2.17} & 0.8472$\pm$0.0341 & \undertable{0.1927} & \undertable{34.14$\pm$2.77} & 0.8938$\pm$0.0249 & \undertable{0.1503} & 29.40$\pm$2.44 & 0.8444$\pm$0.0306 & \undertable{0.2020} \\
				\cmidrule(lr){2-11}
				&Ours & \boldtable{34.14$\pm$2.72} & \boldtable{0.9058$\pm$0.0220} & \boldtable{0.1616} & \boldtable{37.51$\pm$2.96} & \boldtable{0.9494$\pm$0.0117} & \boldtable{0.0872} & \boldtable{35.90$\pm$3.50} & \boldtable{0.9342$\pm$0.0166} & \boldtable{0.0991}\\
				\midrule
				\multirow{4}{*}{\rotatebox{90}{noise = 0.05}} 
				&DPS &16.88$\pm$4.079 & 0.5899$\pm$0.1423 & 0.3427 & 17.05$\pm$4.25 & 0.6176$\pm$0.1417 &  0.3297 & 17.15$\pm$4.16 & 0.6259$\pm$0.1410 & 0.3274 \\
				&DDS & 26.48$\pm$5.06 & 0.7872$\pm$0.0636 & 0.2807 & 30.01$\pm$4.93 & 0.8417$\pm$0.0456 & 0.2321 & \undertable{29.67$\pm$ 4.53} & \undertable{0.8678$\pm$0.0345} & 0.2418 \\
				&cEDM & \undertable{30.97$\pm$2.55} & \undertable{0.8226$\pm$0.0413} & \undertable{0.2096} & \undertable{32.39$\pm$2.72} & \undertable{0.8510$\pm$0.0317} & \undertable{0.1859} & 28.95$\pm$2.50 & 0.8257$\pm$0.0347 & \undertable{0.2199} \\
				\cmidrule(lr){2-11}
				&Ours & \boldtable{32.21$\pm$2.29}  & \boldtable{0.8658$\pm$0.0304}  & \boldtable{0.1906} & \boldtable{34.11$\pm$2.30} & \boldtable{0.8958$\pm$0.0239} & \boldtable{0.1582} & \boldtable{32.93$\pm$3.82} & \boldtable{0.8766$\pm$0.03288} & \boldtable{0.1792} \\
                \bottomrule
			\end{tabularx}
            }
            }
	\end{minipage}
	\label{tab:mri}
\end{table*}
To evaluate our approach on the FFHQ dataset~\cite{Karras2018ASG}, we compare our approach with the following methods: (1)  
 \textbf{Score approximation methods}: DPS ~\cite{chung2023diffusion}, DiffPIR ~\cite{zhu2023denoising}, and PnP-EDM ~\cite{wu2024principled}.
(2) \textbf{Conditional score estimation}: CDDB ~\cite{chung2023direct} and Conditional EDM (cEDM) ~\cite{karras2022elucidating}.

For explicit approximation methods, we use the pre-trained checkpoint from \cite{chung2023diffusion}, and apply it within the explicit approximation approaches. However, it is important to note that conditional score estimation methods are operator-dependent, meaning each trained model is specified to a particular operator, making the direct comparisons less straightforward.

Notably, CDDB was not trained on noise level $0.05$, resulting in suboptimal Gaussian deblurring performance due to its training on a limited range of noise levels. To address this limitation, we trained cEDM~\cite{karras2022elucidating} by concatenating the measurements to the network's input for each measurement configuration. We conducted comparisons at two noise levels, $0.03$ and $0.05$, with numerical results presented in~\cref{tab:img}. 
Across all three tasks, Diff-Unfolding consistently outperform prior methods in PSNR and SSIM while achieving competitive or superior LPIPS scores. For example, on Gaussian deblur at noise level 0.05, Diff-Unfolding achieves $31.27$ dB PSNR and $0.8752$ SSIM, outperforming cEDM by $+1.28$ dB PSNR, $+0.0416$ SSIM, and $-0.0177$ LPIPS. 
\begin{table}[!ht]
	\centering
	\setlength{\tabcolsep}{0.3mm}
	\begin{center}
		\scriptsize 
		\caption{\small Evaluation of Diff-Unfolding using several sampling methods evaluated on  FFHQ color test images and fastMRI test set with a noise level 0.05. \boldtablep{Bold:} best performance; \undertable{Underline:} second best. Based on these results, we use EDM across all experiments.}
		\begin{minipage}{0.85\textwidth}
			\centering
                \makebox[\textwidth]{
			\scalebox{1}{ 
            \centering
				\begin{tabularx}{\textwidth}{c *{10}{c}} 
					\toprule
					\multirow{2}{*}{Dataset}&\multirow{2}{*}{Method} 
					& \multicolumn{3}{c}{Gaussian deblur} 
					& \multicolumn{3}{c}{Super-resolution (4$\times$)} 
					& \multicolumn{3}{c}{Inpainting} \\
					\cmidrule(lr){3-5} \cmidrule(lr){6-8} \cmidrule(lr){9-11}
					&& PSNR ($\uparrow$) & SSIM ($\uparrow$) & LPIPS ($\downarrow$) 
					& PSNR ($\uparrow$) & SSIM ($\uparrow$) & LPIPS ($\downarrow$) 
					& PSNR ($\uparrow$) & SSIM ($\uparrow$) & LPIPS ($\downarrow$) \\ 
					\midrule
					\multirow{4}{*}{FFHQ}&Ours (VP) & 31.10 & 0.8912 & \undertable{0.2014} & 29.86 & 0.8680 & 0.2245 & 29.96 & 0.9023 & 0.1594 \\
					&Ours (VE) & \undertable{32.03} & \undertable{0.8964} & 0.2157 & \undertable{30.24} & \undertable{0.8696} & 0.2329 & 33.90 & 0.9278 & 0.1783 \\
					&Ours (iDDPM) & \boldtablep{32.14} & \boldtablep{0.9010} & 0.2126 & \boldtablep{30.33} & \boldtablep{0.8746} & 0.2248& \boldtablep{35.01} & \boldtablep{0.9425} & 0.1592  \\
					&Ours (EDM) & 31.27 & 0.8752 & \boldtablep{0.1526} & 29.16 & 0.8437 & \boldtablep{0.1676} & \undertable{34.23} & \undertable{0.9306} & \boldtablep{0.1037} \\\cmidrule(lr){1-11}
                                   \multirow{4}{*}{fastMRI} &Ours (VP) & 30.81 & 0.8341 & 0.2383 & 30.71 & 0.8271 & 0.2034 & 26.21 & 0.7791 & 0.2645 \\
				&Ours (VE) & 31.80 & \undertable{0.8624} & \undertable{0.2217} & 32.46 & 0.8581 & \undertable{0.1901} & \undertable{30.58} & 0.8383 & \undertable{0.2167} \\
				&Ours (iDDPM) & \undertable{31.83} & 0.8523 & 0.2403 & \undertable{33.00} & \undertable{0.8703} & 0.2138 & 29.78 & \undertable{0.8524} & 0.2180 \\
				&Ours (EDM) & \boldtablep{32.21} & \boldtablep{0.8658} & \boldtablep{0.1906} & \boldtablep{34.11} & \boldtablep{0.8958} & \boldtablep{0.1582} & \boldtablep{32.93} & \boldtablep{0.8766} & \boldtablep{0.1792} \\

				\bottomrule
					\bottomrule
				\end{tabularx}
				\label{tab:img_abl}
			} 
            }
		\end{minipage}
	\end{center}
        \vspace{-10pt}
\end{table}

\cref{fig:img_d_gaussian} illustrates visual comparisons for Gaussian deblurring and inpainting. Diff-Unfolding restores fine facial details more accurately and avoids common artifacts like over-smoothing or hallucination seen in DPS and cEDM. Importantly, even under heavy masking or noise, the reconstructions remain stable and visually coherent.

\paragraph{Ablation Study on Different Sampling Variants of DMs.} We conducted an ablation study on four sampling strategies—VP-SDE, VE-SDE, iDDPM, and EDM—using the FFHQ and fastMRI dataset at a noise level of $0.05$. As shown in~\cref{tab:img_abl}, in the FFHQ test, the iDDPM variant achieved the highest PSNR and SSIM, while EDM produced the lowest (best) LPIPS score. However, in the fastMRI test, EDM produced the best result in all tasks. Based on these results, we focus used EDM sampling accross all the experiments.

%
%
\subsection{Validation on fastMRI}

We evaluate our approach on the fastMRI dataset~\cite{Zbontar2018fastMRIAO}. We compare against the following baselines:  \textbf{(1) traditional methods}: E2E-Varnet~\cite{sriram2020end};  \textbf{(2) Score approximation methods}: DPS~\cite{chung2023diffusion} , DDS~\cite{Chung2023DecomposedDS}; and \textbf{(3) Conditional Score esimtaiton}: cEDM~\cite{karras2022elucidating}. More implementation details can be found in~\cref{sec:implementaion}. For the explicit approximation methods, we use the pre-trianed checkpoint from~\cite{Chung2023DecomposedDS}, and apply it within the explicit approximation approaches. 

We consider both noise-free and noisy scenarios. Quantitative results are presented in~\cref{tab:mri}. Across all the settings, Diff-Unfolding outperforms other methods. In the noise-free Uniform 1D ($4\times$), Diff-Unfolding achieves $34.14$ dB PSNR, $0.9058$ SSIM, and $0.1616$ LPIPS, significantly surpassing all baselines by a large margin, $+2.46$, $+0.0586$ and $-0.0311$,substantially outperforming the strongest baseline (cEDM). \cref{fig:mri} shows several visual results. Diff-Unfolding is able to preserve anatomical structures, avoid aliasing artifacts, and reconstruct fine details such as cartilage and soft tissue boundaries. In contrast, DPS and DDS often produce blurry or inconsistent outputs, especially under high undersampling and high noise.

We have also compared Diff-Unfolding against the SSDiffRecon method from~\cite{korkmaz2023self}. However, 
\begin{wraptable}{r}{0.42\textwidth}
\centering
\scriptsize
\vspace{-10pt}
\caption{\small Quantitative comparison with SSDiffRecon~\cite{korkmaz2023self} on three metrics. Our method outperforms SSDiffRecon across all metrics.}
\begin{tabular}{lccc}
\toprule
\textbf{Method} & \textbf{PSNR} $\uparrow$ & \textbf{SSIM} $\uparrow$ & \textbf{LPIPS} $\downarrow$ \\
\midrule
SSDiffRecon & 42.67 & 0.9848 & 0.0538 \\
Ours & \textbf{48.03} & \textbf{0.9903} & \textbf{0.0481} \\
\bottomrule
\end{tabular}
\vspace{-10pt}
\label{tab:ssdfcomparison}
\end{wraptable}
since SSDiffRecon was only implemented to perform self-supervised learning on MRI, we implemented a self-supervised version of Diff-Unfolding adapted to the experimental settings in~\cite{korkmaz2023self}.
We report the corresponding result, which was trained on the same dataset using a 2D Gaussian sampling pattern with an acceleration factor of 4. As can be seen in~\cref{tab:ssdfcomparison},  Diff-Unfolding achieves PSNR of $48.03$, SSIM of $0.9903$, and $0.481$ of LPIPS, outperforming the results by~\cite{korkmaz2023self} (PSNR: 42.67, SSIM: 0.9848, LPIPS: 0.0538).


\subsection{Computational Efficiency}
\label{sec:efficiency_com}
\begin{wraptable}{r}{0.5\textwidth}
	\centering
    \vspace{-30pt}
	\setlength{\tabcolsep}{0.3mm}
	\scriptsize 
	\caption{\small Performance comparison of different methods on FFHQ: Our method achieves faster inference with fewer parameters.}
    \vspace{-5pt}
	\begin{minipage}{0.5\textwidth}
    			\centering
		\begin{tabular}{lcccc}
			\toprule
			\textbf{Dataset} & \textbf{Method} & \textbf{Param. Size(M)}($\downarrow$) & \textbf{NFE} ($\downarrow$) & \textbf{s/image}($\downarrow$) \\
			\midrule
			\multirow{5}{*}{FFHQ} 
			& CDDB & 552.80 & 100 & $108.70$ \\
			& cEDM & 138.28 & 35 & $1.59$ \\
			& DiffPIR & 93.56 & 100 &  $1.52$ \\
			& DPS & 93.56 & 1000 & $78.77$ \\
			& PnP-EDM & 93.56 & 100 & $120.98$ \\
			& Ours & 46.94 & 18  & $0.72$ \\
			\bottomrule
		\end{tabular}
		\label{tab:computation}
	\end{minipage}
    \vspace{-15pt}
\end{wraptable}
We assess the computational efficiency of our model on the FFHQ dataset for the deblurring task using a single RTX Titan GPU. As shown in Table~\cref{tab:computation}, our approach 
achieves the lowest computational cost in terms of both parameter count and inference time, while also attaining the best LPIPS among all baselines. Additionally, we provide a C++/Libtorch implementation of our method, which further reduces inference time to $0.63$ \textbf{s/image}, offering an additioanl performance gain in practical deployment scenarios.

\section{Conclusion}
\label{sec:conclusion}

In this work, we introduced Diff-Unfolding, a novel framework that integrates physics-based deep unfolding with conditional diffusion models to address a broad class of inverse problems. By explicitly modeling the posterior score function during training, Diff-Unfolding aims to approximate the posterior and enables a single model to generalize across diverse forward operators. Extensive experiments on image restoration and accelerated MRI tasks demonstrate that Diff-Unfolding achieves state-of-the-art performance across both quantitative metrics (PSNR, SSIM) and perceptual quality (LPIPS), while using fewer parameters and offering significantly faster inference than existing methods.

\section*{Limitations}

Despite its advantages,Diff-Unfolding has several limitations. While it supports multiple forward operators, the current implementations are restricted to linear degradations and additive Gaussian noise. Future work could extend the framework to handle more complex and nonlinear degradations. Additionally, although Diff-Unfolding is more efficient than previous diffusion-based methods, it currently requires fine-tuning to handle out-of-distribution operators, which is not considered during the training.

\section*{Impact Statement}

By unifying diverse imaging tasks under a single framework Diff-Unfolding could have the potential to reduce the need for task-specific models and training, and lower the computation of inference. Its efficiency and flexibility make it promising for practical deployment in real-world applications, including medical imaging. We expect this work could make a positive impact on computational imaging and inverse problems approaches.

\section{Acknowledgment}
This paper is supported by the NSF CAREER awards under grant CCF-2043134.

{
\small

\bibliographystyle{IEEEtran}

\bibliography{reference.bib}

}

\appendix
	\appendix
    \newpage
\section{Proof of Theorem~\ref{thm:thm1}}\label{app:proof}

\begin{theoremsup}\label{thm:thm1}
    Suppose that the prior density $p(\xb)$ is non-degenerate on $\R^n$, and let $\Dsf_{\theta^\ast}(\xb_t,\yb)$ be the MMSE denoiser defined in~\eqref{Eq:MMSEEstimator}.
    Then, the following statements hold for all $t \in [0, T]$
    \begin{itemize}
        \item[(a)] The denoiser satisfies the following composite optimization problem
        $$\Dsf_{\theta^\ast}( \xb_t,\yb) = \underset{\xb \in \R^n}{\argmin} \left \{ \frac{1}{2} \| \Ab \xb - \yb\|_2^2 +  h_t(\xb)\right \},$$
        for some regularizer $h_t: \R^n \to \R\cup\{+\infty\}$.

        \item[(b)] The denoiser can be related to the following posterior score function 
\begin{equation*}
     \Dsf_{\theta^\ast}(\xb_t,\yb) = \xb_t + \sigma_t^2 \nabla \log p (\xb_t \,|\, \yb),
\end{equation*}
where $p(\xb_t | \yb)$ is the conditional distribution of $\xb_t$ given $\yb$.
    \end{itemize}
\end{theoremsup}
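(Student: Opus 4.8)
The plan is to prove part (b) first, since it gives the concrete form of the denoiser that then makes part (a) transparent. For part (b), I would start from the definition $\Dsf_{\theta^\ast}(\xb_t,\yb) = \E[\xb \mid \xb_t, \yb]$ and write this conditional expectation as an integral against the posterior density $p(\xb \mid \xb_t, \yb)$. The key observation is that the forward corruption in~\eqref{eq:samplingNoisy} is $\xb_t = \xb + \sigma_t \nb$ with $\nb \sim \Ncal(\boldzero, \Ib)$ \emph{independent} of $\yb$ given $\xb$; hence the transition density $p(\xb_t \mid \xb, \yb) = p(\xb_t \mid \xb)$ is the Gaussian kernel $(2\pi\sigma_t^2)^{-n/2}\exp(-\|\xb_t - \xb\|_2^2/(2\sigma_t^2))$. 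This is exactly the structure needed for a Tweedie-type identity: conditioning everything on $\yb$, the pair $(\xb, \xb_t)$ is a standard Gaussian-denoising pair with prior $p(\xb \mid \yb)$. Applying Tweedie's formula in this conditional world gives $\E[\xb \mid \xb_t, \yb] = \xb_t + \sigma_t^2 \nabla_{\xb_t} \log p(\xb_t \mid \yb)$, which is the claimed identity. Concretely I would verify this by differentiating $p(\xb_t \mid \yb) = \int p(\xb_t \mid \xb)\, p(\xb \mid \yb)\, \diff\xb$ under the integral sign, using $\nabla_{\xb_t} p(\xb_t \mid \xb) = \sigma_t^{-2}(\xb - \xb_t)\, p(\xb_t \mid \xb)$, and dividing through by $p(\xb_t \mid \yb)$; the non-degeneracy of $p(\xb)$ (inherited by $p(\xb \mid \yb)$) ensures $p(\xb_t \mid \yb) > 0$ everywhere and that the differentiation-under-the-integral is justified by standard dominated-convergence arguments for Gaussian convolutions.

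For part (a), I would leverage part (b) together with the fact that the Gaussian convolution $p(\xb_t \mid \yb)$ is smooth and strictly positive on all of $\R^n$. Define the candidate regularizer by
\begin{equation*}
h_t(\xb) \defn -\tfrac{1}{2}\|\Ab\xb - \yb\|_2^2 - \tfrac{1}{2\sigma_t^2}\|\xb - \xb_t\|_2^2 + c,
\end{equation*}
or, more intrinsically, in terms of the log-posterior so that the stationarity condition of the composite objective $\tfrac12\|\Ab\xb-\yb\|_2^2 + h_t(\xb)$ reproduces the Tweedie relation. The cleanest route: set $q_t(\xb) \defn \tfrac{1}{2\sigma_t^2}\|\xb - \Dsf_{\theta^\ast}(\xb_t,\yb)\|_2^2$ — a strictly convex quadratic minimized exactly at the denoiser output — and then choose $h_t(\xb) \defn q_t(\xb) - \tfrac12\|\Ab\xb - \yb\|_2^2$ plus whatever constant is convenient. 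Then $\tfrac12\|\Ab\xb-\yb\|_2^2 + h_t(\xb) = q_t(\xb)$, whose unique minimizer is $\Dsf_{\theta^\ast}(\xb_t,\yb)$, giving (a) with $h_t$ finite-valued (so the $\{+\infty\}$ in the codomain is not even needed here, though allowing it costs nothing). I should remark that $h_t$ depends implicitly on $\xb_t$ and $\yb$ through the denoiser value, which is consistent with the theorem statement treating $h_t$ as an abstract task- and step-dependent regularizer.

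The main obstacle is the rigorous justification of Tweedie's formula in the conditional setting — specifically, interchanging $\nabla_{\xb_t}$ with the integral $\int p(\xb_t\mid\xb) p(\xb\mid\yb)\diff\xb$ and showing $\E[\xb\mid\xb_t,\yb]$ is well-defined and finite. Non-degeneracy of $p(\xb)$ on $\R^n$ is the hypothesis that makes this work: it guarantees $p(\xb\mid\yb)$ is a genuine density, that its Gaussian mollification $p(\xb_t\mid\yb)$ is everywhere positive and $C^\infty$, and that the integrand and its $\xb_t$-gradient are dominated locally uniformly by integrable functions (the Gaussian tail kills polynomial growth of $\xb$). A secondary subtlety is that part (a) as literally stated only asks for \emph{existence} of some $h_t$, so one should be careful not to over-claim convexity or problem-independence of $h_t$; the construction above is deliberately minimal. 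I would present part (b) with the differentiation-under-the-integral lemma spelled out, then obtain part (a) as a short corollary.
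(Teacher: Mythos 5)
Your part (b) is correct and is essentially the paper's own argument: the conditional Tweedie identity obtained by differentiating $p(\xb_t\mid\yb)=\int p(\xb_t\mid\xb)\,p(\xb\mid\yb)\,\diff\xb$ under the integral, using $\nabla_{\xb_t}p(\xb_t\mid\xb)=\sigma_t^{-2}(\xb-\xb_t)p(\xb_t\mid\xb)$ and the conditional independence $p(\xb_t\mid\xb,\yb)=p(\xb_t\mid\xb)$. No issue there.

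Part (a), however, has a genuine gap: your construction $h_t(\xb)\defn\tfrac{1}{2\sigma_t^2}\|\xb-\Dsf_{\theta^\ast}(\xb_t,\yb)\|_2^2-\tfrac12\|\Ab\xb-\yb\|_2^2$ simply cancels the data-fidelity term, so the composite objective collapses to a quadratic centered at whatever point you wish. By this argument \emph{any} map of $(\xb_t,\yb)$ --- not just the MMSE denoiser --- would "satisfy" the claimed variational characterization, so you have proved a vacuous statement rather than the intended one. The content of part (a) is that the regularizer is determined by the \emph{prior} and does not depend on $\yb$ (and does not secretly absorb $\Ab$); that is precisely what licenses the deep-unfolding architecture in which $\nabla g$ is computed from the physics and only the prior is learned. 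The paper gets this by first rewriting $\E[\xb\mid\xb_t,\yb]=\E_{\xb\sim p(\xb\mid\xb_t)}[\xb\mid\yb]$ via Bayes' rule --- i.e., the denoiser is the MMSE estimator for the linear-Gaussian model $\yb=\Ab\xb+\eb$ under the \emph{modified prior} $p(\xb\mid\xb_t)$ --- and then invoking Gribonval's characterization (Theorem~2 of~\cite{priorGribonval}) of such MMSE estimators as minimizers of $\tfrac12\|\Ab\xb-\yb\|_2^2+h(\xb)$, where $h$ is built from the prior alone. The resulting regularizer $h(\cdot\mid\xb_t)$ depends on $\xb_t$ and $t$ but not on $\yb$, which is what makes the theorem non-trivial. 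To repair your proof you would need to either cite such a representation theorem for MMSE estimators or reconstruct its argument; the quadratic-cancellation trick cannot be salvaged. (Your first candidate $h_t$, with the negative sign on $\tfrac{1}{2\sigma_t^2}\|\xb-\xb_t\|_2^2$, additionally makes the composite objective concave and unbounded below, so it has no minimizer at all.)
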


\begin{proof}
\noindent
\begin{itemize}
\item [(a)] Since $\Dsf_{\theta^\ast}(\xb_t,\yb)$ is trained with MSE loss
\begin{equation}
    \theta^\ast =\underset{\theta}{\mathsf{arg\,min}} \left\{ \E \left [ \| \Dsf_\theta (\xb_t, \yb) - \xb\|_2^2\right ]\right\},
\end{equation}
it corresponds to the conditional mean estimator $ \E[\xb\mid \xb_t,\yb] $. Expanding expectation, we obtain
\begin{align}
   \nonumber \Dsf_{\theta^\ast}( \xb_t,\yb) &=  \E[\xb \,|\, \xb_t,\yb] = \int \xb~p(\xb | \xb_t, \yb)~d \xb \\
      \nonumber& = \int \xb \frac{p(\yb | \xb_t, \xb )p(\xb, \xb_t)}{p(\xb_t, \yb) }~d \xb \\
    & =  \int \xb \frac{p(\yb | \xb)p(\xb|\xb_t) }{p(\yb|\xb_t) }~\dd \xb  = \E_{\xb\sim p(\xb|\xb_t)}[\xb|\yb], 
\end{align}
where  the second and third lines follow from  Bayes' rule. In the third equality, we also used the assumption  that $\xb_t$ and $\yb$ are conditionally  independent given $\xb$, i.e., \(p(\yb | \xb_t, \xb) = p(\yb | \xb)\). 

$\E_{\xb\sim p(\xb|\xb_t)}[\xb|\yb]$  is the posterior expectation over \(p(\xb | \xb_t)\), updated by the likelihood \(p(\yb | \xb)\). That is, the clean signal \(\xb\) is first predicted based on the diffusion sample \(\xb_t\), and then reweighted by how likely that \(\xb\) would produce the measurement \(\yb\). The last expression implies that we can recondition the prior \(p(\xb | \xb_t)\) using the measurement \(\yb\), and then taking the expectation under this updated posterior.

According to Theorem 2 of~\cite{priorGribonval}, the conditional MMSE estimator \(\mathbb{E}[\xb | \yb]\), where \(\xb \sim p_{\xb}\), minimizes the functional  $\frac{1}{2}\|\Ab\xb - \yb\| + h(\xb)$, where  $h$ is a regularizer derived from the prior $p_{\xb}$. In our setting, the prior is replaced by the conditional distribution \(p(\xb | \xb_t)\), and the corresponding MMSE estimator becomes

\begin{equation}
    \Dsf_{\theta^\ast}(\yb, \xb_t) = \E_{\xb\sim p(\xb|\xb_t)}[\xb|\yb].
\end{equation}
This change in the prior implies that the associated regularizer must also adapt accordingly—from a static regularizer \(h(\xb)\) to a conditional regularizer \(h(\xb | \xb_t)\), which reflects the updated distribution informed by \(\xb_t\).
Putting everything together, we conclude that the estimator can be expressed as the solution to the following optimization problem:
\begin{equation}
    \E_{\xb\sim p(\xb|\xb_t)}[\xb|\yb] = \Dsf_{\theta^\ast}( \xb_t,\yb) = \underset{\xb \in \R^n}{\argmin} \left \{ \frac{1}{2} \| \Ab \xb - \yb\|_2^2 +  h(\xb| \xb_t)\right \}. 
\end{equation}

\item [(b)] We now show that the conditional MMSE estimator \(\Dsf_{\theta^\ast}(\xb_t, \yb) = \mathbb{E}[\xb | \xb_t, \yb]\) can be written in terms of posterior score function \(\nabla_{\xb_t} \log p(\xb_t | \yb)\). Starting from Bayes’ rule, we have 
\begin{equation}\label{eq:bayes_rule_app}
    p(\xb|\xb_t, \yb) = \frac{p(\xb, \xb_t, \yb)}{p(\xb_t, \yb)} = \frac{p(\xb_t |\xb, \yb)}{p(\xb_t, \yb)} = \frac{p(\xb_t |\xb) p(\xb| \yb)}{p(\xb_t| \yb)}
\end{equation}
Let us consider the marginal density 
\begin{equation}
    p(\xb_t|\yb) = \int p(\xb_t|\xb) p(\xb|\yb)~d\xb. 
\end{equation}
Differentiating this with respect to \(\xb_t\), we obtain
\begin{align}
    \nonumber\nabla_{\xb_t} p(\xb_t|\yb) &= \int \nabla_{\xb_t} p(\xb_t|\xb) p(\xb|\yb)~d\xb \\
    \nonumber& = \frac{1}{\sigma_t^2}\int (\xb - \xb_t) p(\xb_t|\xb) p(\xb|\yb)~d\xb\\
    \nonumber& = \frac{1}{\sigma_t^2}\int (\xb - \xb_t) p(\xb|\xb_t, \yb)  p(\xb_t|\yb)~d\xb \\
    & = \frac{p(\xb_t|\yb)}{\sigma_t^2}\int (\xb - \xb_t) p(\xb|\xb_t, \yb)  ~d\xb, 
\end{align}
where in the third line, we used the result from~\eqref{eq:bayes_rule_app}. Rearranging the terms yields 
\begin{equation}
     \E[\xb - \xb_t|\xb_t, \yb] = \sigma_t^2 \frac{\nabla_{\xb_t} p(\xb_t|\yb)}{p(\xb_t|\yb)} = \sigma_t^2 \nabla_{\xb_t} \log p(\xb_t|\yb), 
\end{equation}
and hence
\begin{equation}
     \E[\xb|\xb_t, \yb] = \xb_t +  \sigma_t^2 \nabla_{\xb_t} \log p(\xb_t|\yb).
\end{equation}
\end{itemize}
\end{proof} \normalfont
\section{Operation Setting}
\label{sec:svd_op}
In this work, we employ several key operators for image restoration tasks, including super-resolution, deblurring, and inpainting. For super-resolution and deblurring, we utilize the singular value decomposition (SVD) as suggested in prior works~\cite{kawar2021snips,kawar2022denoising}. SVD is a matrix factorization technique that decompose a matrix to three components: $U$,$\Sigma$, and $V^T$, where $U$ and $V$ are orthogonal matrices, and $\Sigma$ is a diagonal matrix containing the singular values. This decomposition enables effective handling the ill-posed nature of super-resolution and deblurring by regularizing the solution space and preserving important structure information in  images. 

For the inpainting task, we employ random dust-like mask patterns with varying probabilities of 0.2, 0.4, and 0.6. These masks simulate realistic scenarios where portions of the image are missing or corrupted. The inpainting process aims to reconstruct the missing regions by leveraging the surrounding pixel information and prior knowledge of the image structure. This approach is particularly useful for handling occlusions or artifacts in real-world images.

By integrating these operators into a unified training process, we aim to comprehensively address the challenges of image restoration, ensuring robustness and high-quality results across various degradation scenarios with a single training cycle.

\section{Implementation details}
\label{sec:implementaion}

To ensure a fair comparison across experiments, we conduct our experiments on 2 NVIDIA A100-SXM4-80GB GPUS, using Python 3.9.20, PyTorch 2.1, CUDA 11.8, and CuDNN 8.9.4. Model testing is performed on a single RTX Titan GPU, under identical software and environment conditions. Additionally, we further implement our model on C++/LibTorch framework, resulting in a $14\%$ performance improvement in time consumption. For experiments involving MRI reconstruction, we simulate coil sensitivity maps for 15 coils from the \texttt{reconstruction\_rss} of the multi-coil dataset.

\subsection{Implementation details of comparisons}
\paragraph{E2E-VarNet~\cite{hammernik2018learning}} We trained the supervised learning-based method using Uniform 1D, Gaussian 1D, and Gaussian 2D subsampling patterns, aligning with the sampling methods utilized in the paper, and adhering closely to the official implementation in the paper.

\paragraph{SSDiffRecon~\cite{korkmaz2023self}} SSDiffRecon was originally a self-supervised approach implemented in Tensorflow, designed for a 5-coil setup at a resolution of $512\times 512$. We adapted our approach similarly to a self-supervised approach, employing the identical Guassian 2D sub-sampling pattern for both training and testing, ensuring a fair comparison.

\paragraph{cEDM~\cite{karras2022elucidating}} For conditional EMD(cEDM), we implicitly concatenated the measurement data directly to the input of the network, adhering closely to the original EDM settings. For each experiment, we trained a new model.

\paragraph{PnP-EDM~\cite{wu2024principled}}
We followed the PnP-EDM implementation, employing a 50-step EDM sampler backbone, modifying only the forward operator to align with the methodology described in \cref{sec:svd_op}. The pretrained unconditional diffusion checkpoint was taken from~\cite{chung2023diffusion}.

\paragraph{DPS~\cite{chung2023diffusion}} We replicated the original DPS methodology, using a 1000-step DDPM sampler backbone. Similar to PnP-EDM, we modified only the forward operator as detailed in~\cref{sec:svd_op}, utilizing pretrained unconditional diffusion checkpoints from~\cite{chung2023diffusion, Chung2023DecomposedDS}.

\paragraph{DDS~\cite{Chung2023DecomposedDS}} Following the original DDS implementation, we employ a 50-step DDIM sampler backbone without further modifications. The pretrained checkpoint was obtained from ~\cite{Chung2023DecomposedDS}.

\paragraph{CDDB~\cite{chung2023direct}} We directly utilized the original checkpoint provided by the authors without modifications.
\paragraph{DiffPIR~\cite{zhu2023denoising}}. We applied the sample modification as DPS, using the unconditional diffusion model sourced from~\cite{chung2023diffusion}.
\subsection{Explicit Weighting}
\label{sec:ex_weighting}

In conventional deep unfolding (DU) approaches, the parameters $\tau$ and $\lambda$ are typically predefined as learnable constants. In our approach, we accounts for the varying contributions of the data-fidelity and regularization terms across both timesteps and unfolding iterations. Specifically, we model $\tau_t^k = f_{\tau}(t, k; \theta_\tau)$ and $\lambda_t^k = f_{\lambda}(t, k; \theta_\lambda)$ using a one-layer MLP, enabling the parameters to adapt consistently over the sampling timeline. The resulting learned functions are illustrated in~\cref{fig:functau}.

\begin{figure}[ht]
    \centering
    \begin{minipage}[t]{0.75\textwidth}
	\begin{overpic}[width=\columnwidth]{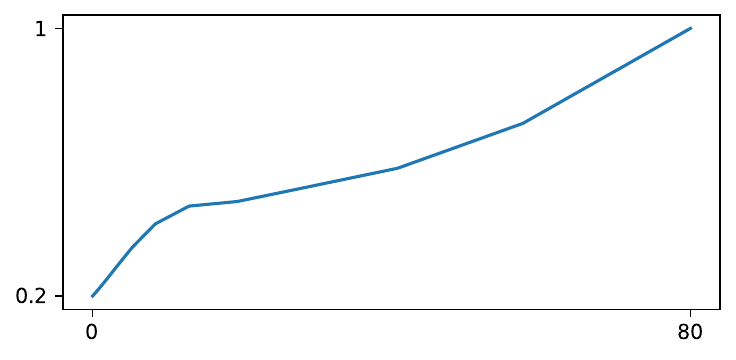}
        \put(52,3){\large $t$}
        \put(1,25){\large $\tau_t$}
    \label{fig:functau}
	\end{overpic}        
        \caption{Learned weight function $\tau_t^k=f_{\tau}(t,k;\theta_\tau)$ when $k=0$. The function varies significantly over time, dynamically adapting to the sampling schedule. At larger timesteps, it assigns greater weight to the diffusion prior, whereas at smaller timesteps, it emphasizes data fidelity.} 
    \end{minipage}\hspace{0.1in}
    \begin{minipage}[t]{0.75\textwidth}

    	\begin{overpic}[width=\columnwidth]{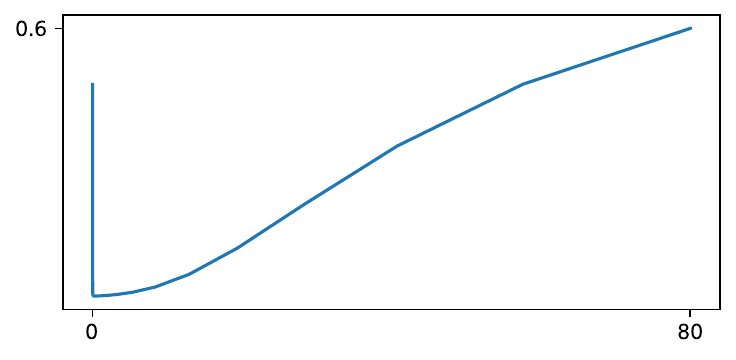}
            \put(52,3){\large $t$}
        \put(1,25){\large $u_t$}
			\label{fig:funcu}
	\end{overpic}   
    \caption{Learned weighted function $u_t = f_u(t;\theta_u)$. The function demonstrates significant changes with different time steps.}
    \end{minipage}
\end{figure}
\subsection{Uncertainty-based Multi-task Learning}
\label{sec:un_learning}
Traditional multi-task learning often aggregates individual losses as a weighted sum,  $\mathcal{L}=\sum_i w_i\mathcal{L}_i$, where $i$ indicates the individual tasks or loss terms within a training batch. However, this method is highly sensitive to the choice of weight $w_i$, particularly when the relative value between different loss terms changes significantly. To address this limitation, we adopt the approach proposed in \cite{kendall2018multi,karras2024analyzing}, and introduce a continuous, learnable weight function that could balance the loss term adaptively.

\begin{equation}
	\mathcal{L}({\theta}, u_t) =\E_\sigma \left[\frac{1}{\exp^{u_t}} \mathcal{L}({\theta};t) +u_t \right]
\end{equation}
Here $u_t = f_u(t;\theta_u)$ is a continuous uncertainty function, and parameterized by a simple one-layer MLP. By modeling log variance, this approach automatically balances the loss terms without manual tuning of weights. 
The optimized $u_t$ is illustrated in ~\cref{sec:un_learning}. This uncertainty-based weighting improves the stability of training and the overall quality of restored images by adaptively focusing on the most reliable information available at each noise level.

\section{Sampling}

Following the EDM~\cite{karras2022elucidating}, our implementation could be simplified as $\Tilde{\yb} =\Rcal_\theta(\xb;\sigma)$, where $\xb$ is a noisy input and $\sigma$ is the corresponding noise level. $\Rcal$ could break down to:

\begin{equation}
    \Rcal_\theta(\xb;\sigma) = c_{skip}(\sigma)\xb + c_{out}(\sigma)\Fcal_\sigma(c_{in}(\sigma)\xb; c_{noise}(\sigma))
\end{equation}
where the inputs and output of the network $\Fcal_{\theta}$ are preconditioned according to $c_{in}$, $c_{out}$, $c_{skip}$ and $c_{noise}$. 
We follow the EDM sampling strategy of the precondition setting~\cite{karras2022elucidating}. The noise schedule, parameters, and preconditioning coefficients vary depending on the chosen noise model. Different settings could be found in ~\cite{karras2022elucidating} Table 1.

\section{Additional experimental results}

\subsection{Ablation study on neural function evaluations}
\label{sec:abl_nfe}
We conducted an ablation study to assess the impact of different neural function evaluations (NFEs) on LPIPS for tasks including Gaussian Deblurring, Super-resolution ($\times 4$), and Inpainting. Our findings suggest that selecting NFE=$18$ provides optimal and consistent performance.
\begin{figure*}[!ht]
	\begin{center}
		\centerline{\includegraphics[width=0.8\columnwidth]{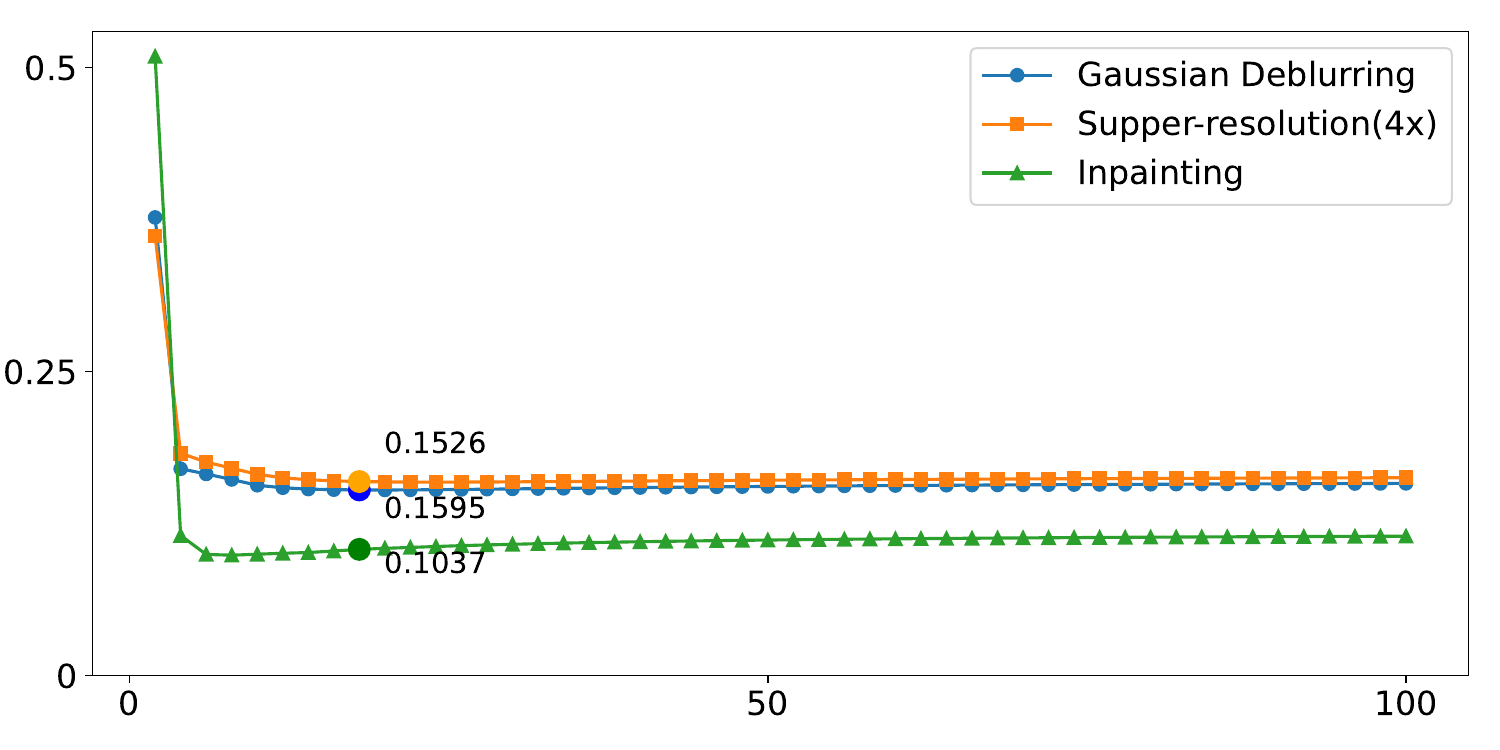}}
		\caption{LPIPS variation with different NFEs.}
		\label{fig:abl}
	\end{center}
\end{figure*}
\subsection{Additional Visual results}
Additional experimental results demonstrate the effectiveness of our approach across various tasks under a noise level of $\sigma_y = 0.05$, as illustrated in~\cref{fig:vgdeblur,fig:vagdeblur,fig:sr8,fig:inpaint40} for FFHQ test dataset and ~\cref{fig:mrig1d8,fig:mrig2d15,fig:mriu1d8} for MRI test dataset.

\newpage
\begin{figure*}[!ht]
	\begin{center}
		\centerline{\includegraphics[width=\columnwidth]{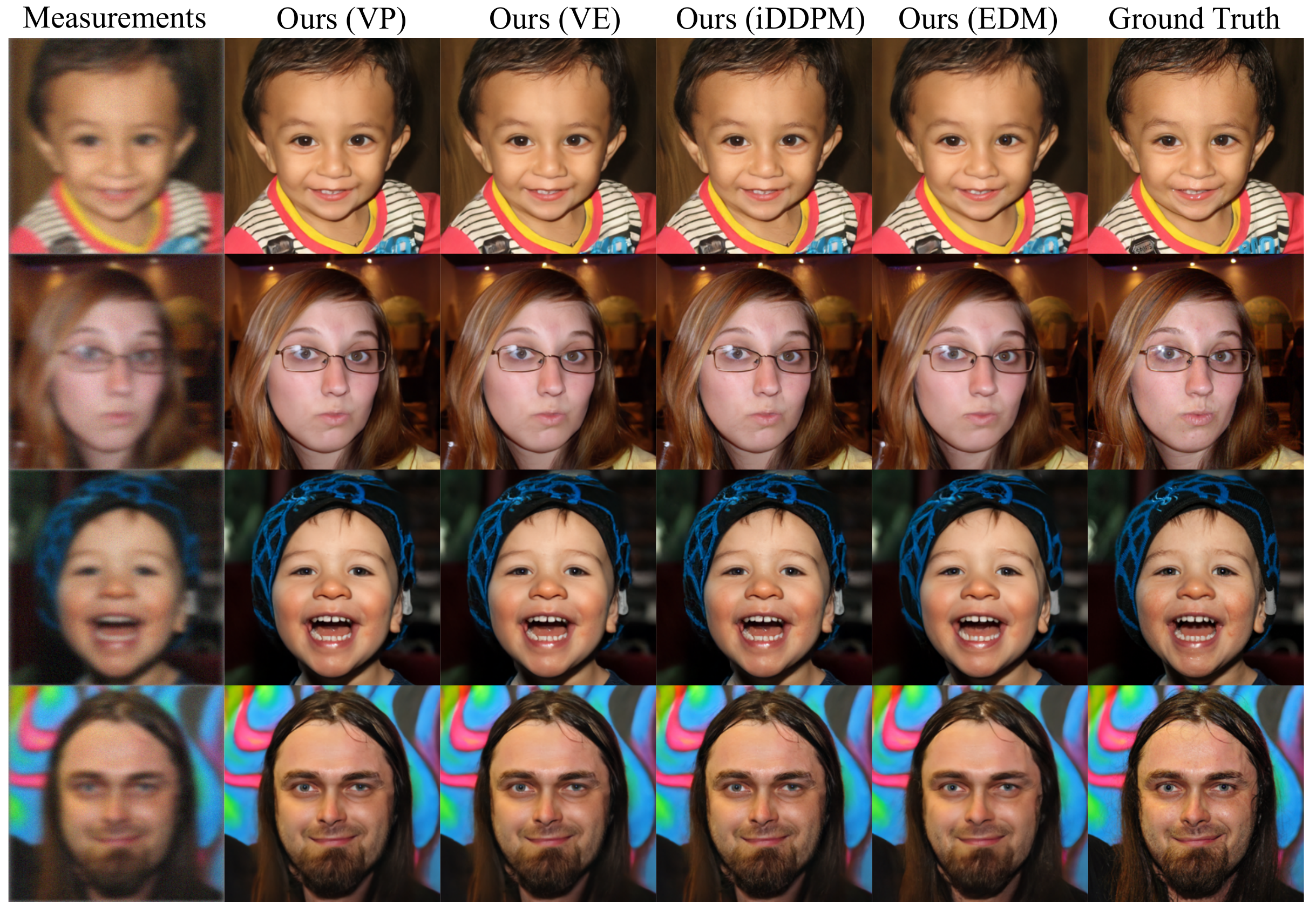}}
		\caption{Visual results of Gaussian deblur}
			\label{fig:vgdeblur}
		\end{center}
\end{figure*}
\begin{figure*}[!ht]
	\begin{center}
		\centerline{\includegraphics[width=\columnwidth]{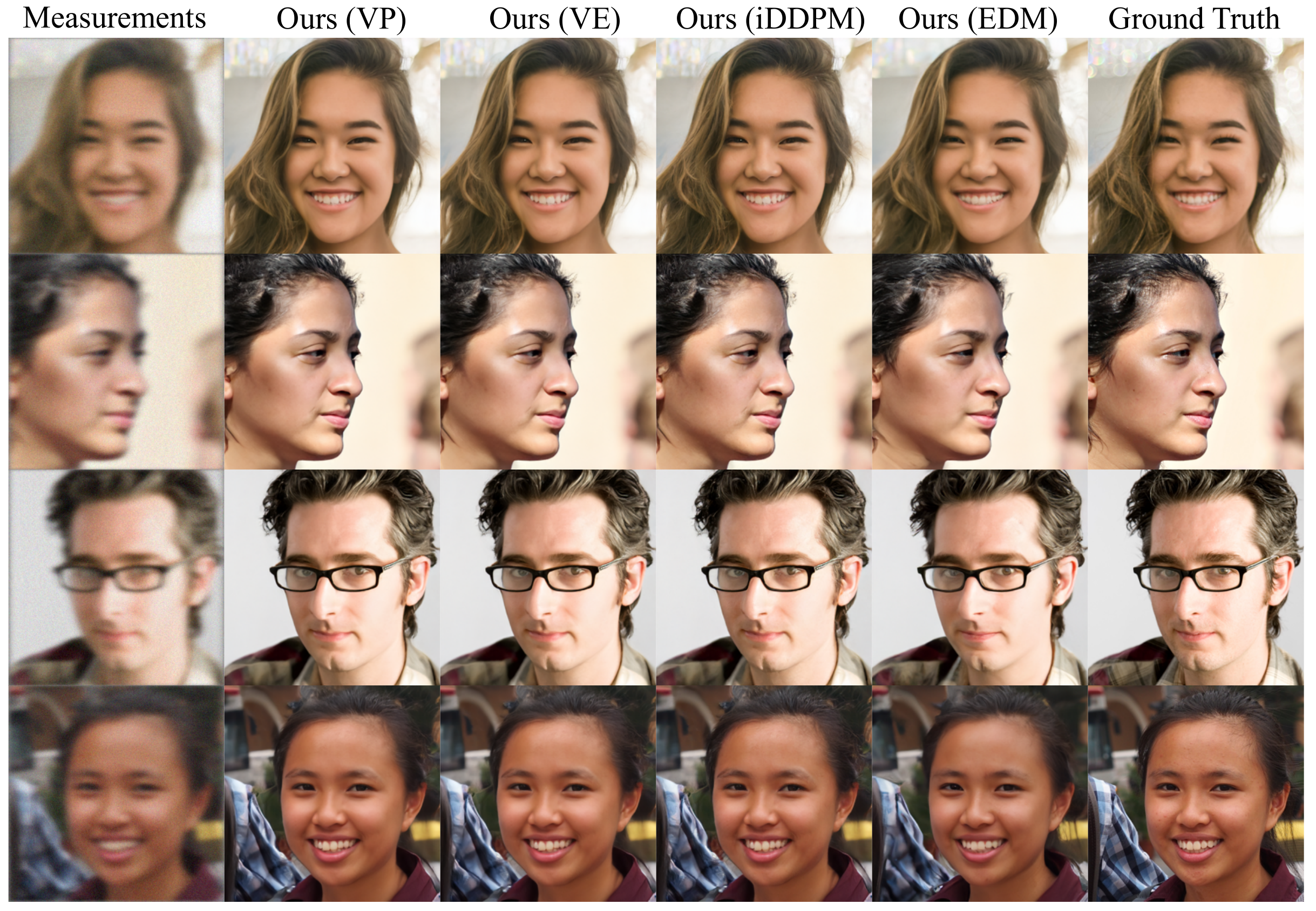}}
		\caption{Visual results of ansiotropic-Gaussian deblur}
			\label{fig:vagdeblur}
		\end{center}
\end{figure*}
\begin{figure*}[!ht]
	\begin{center}
		\centerline{\includegraphics[width=\columnwidth]{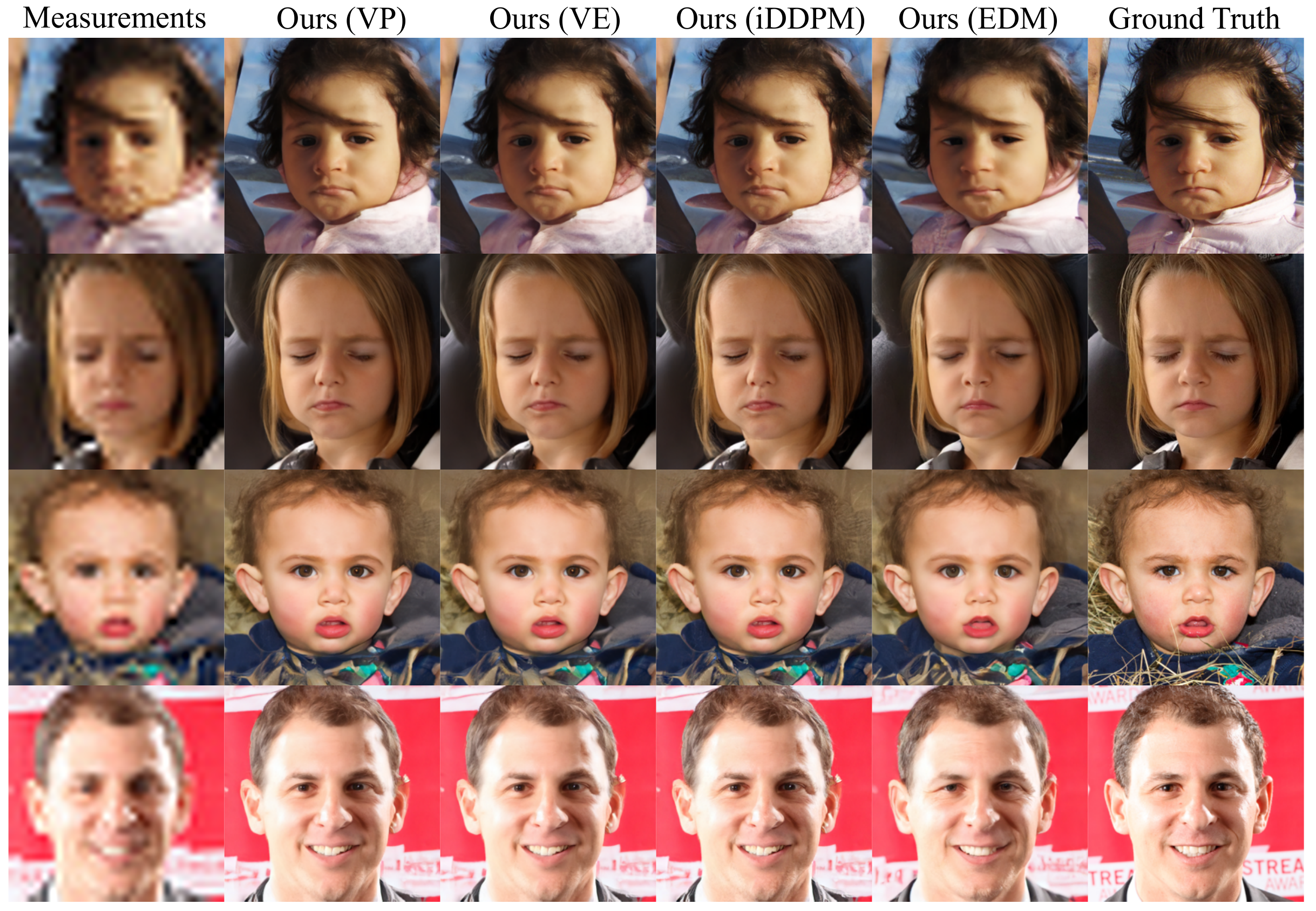}}
		\caption{Visual results of SR$\times8$}
			\label{fig:sr8}
		\end{center}
\end{figure*}
\begin{figure*}[!ht]
	\begin{center}
		\centerline{\includegraphics[width=\columnwidth]{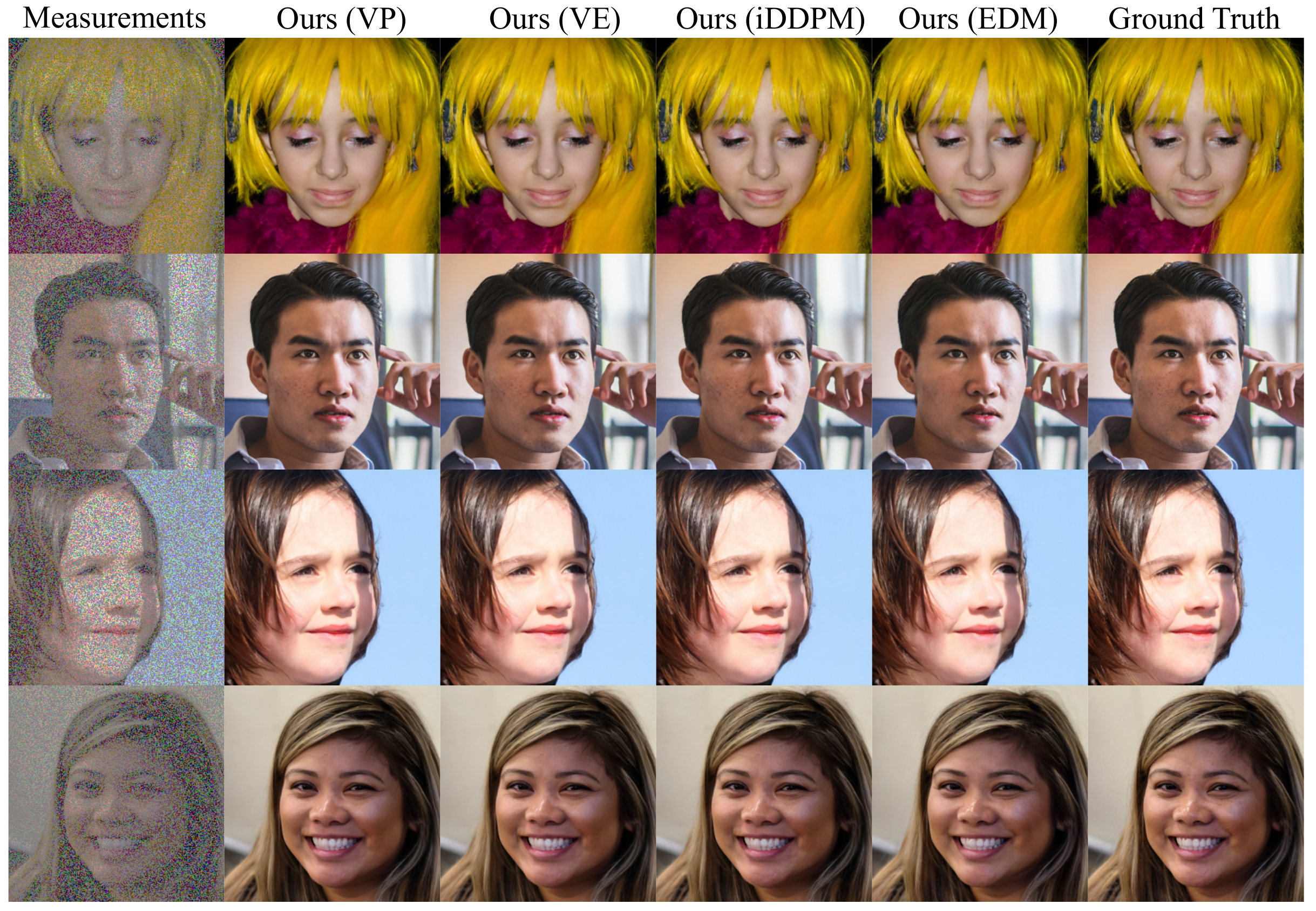}}
		\caption{Visual results of Inpainting with sampling ratio of 0.4-0.5}
			\label{fig:inpaint40}
		\end{center}
\end{figure*}

\begin{figure*}[!ht]
	\begin{center}
		\centerline{\includegraphics[width=\columnwidth]{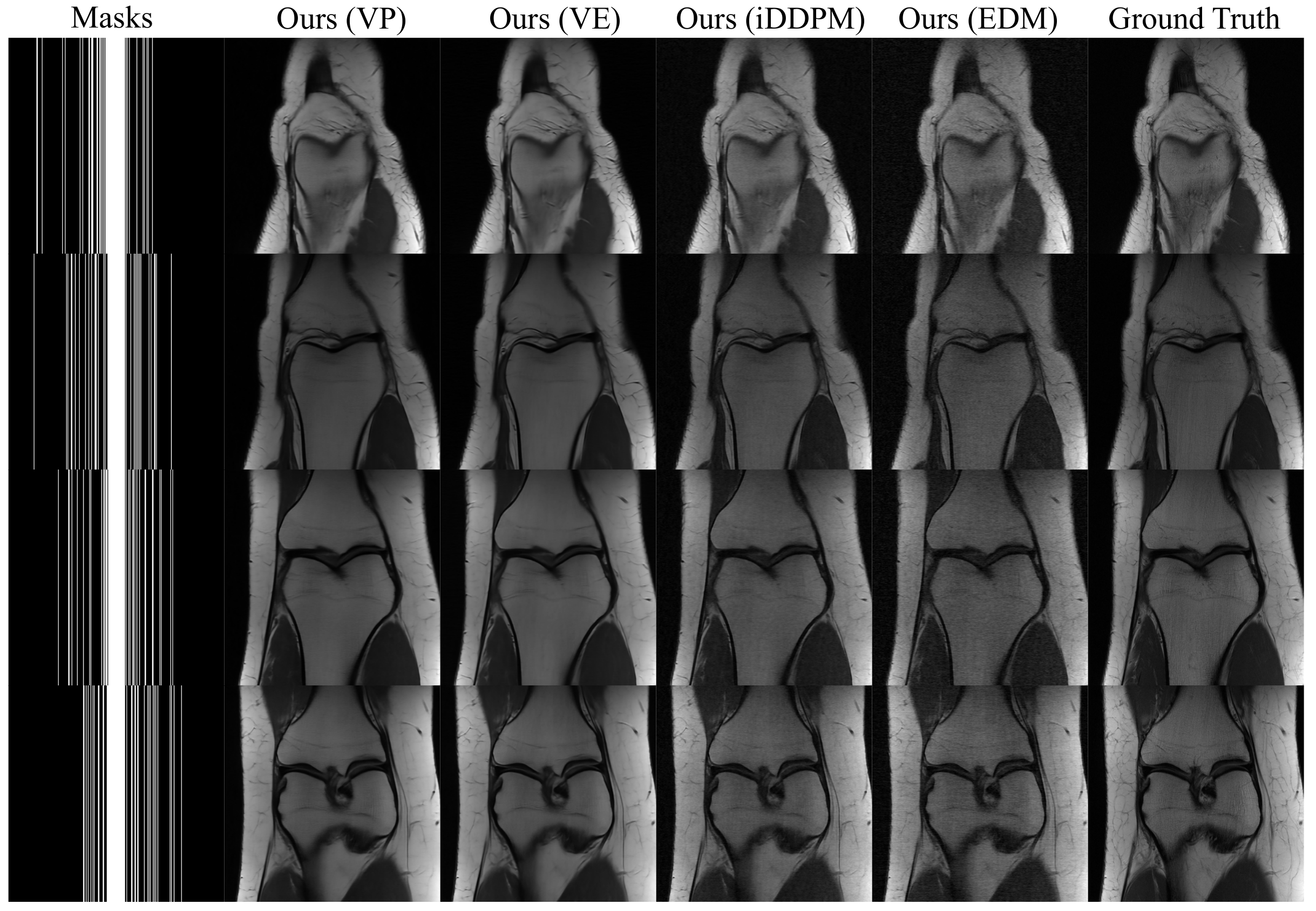}}
		\caption{Visual results of MRI with $8\times$ Gaussian 1D sampling}
			\label{fig:mrig1d8}
		\end{center}
\end{figure*}

\begin{figure*}[!ht]
	\begin{center}
		\centerline{\includegraphics[width=\columnwidth]{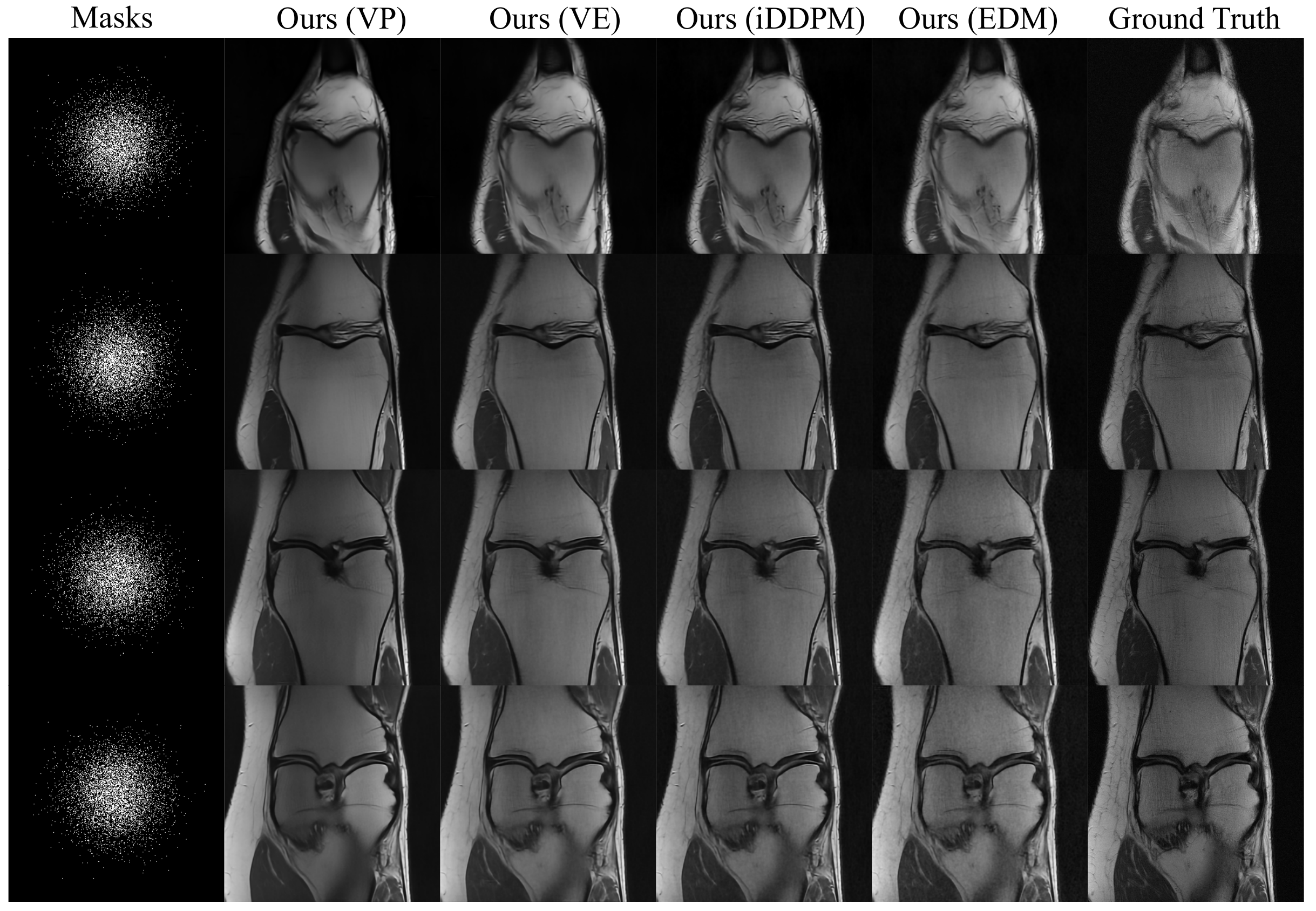}}
		\caption{Visual results of MRI with $15\times$ Gaussian 2D sampling}
			\label{fig:mrig2d15}
		\end{center}
\end{figure*}

\begin{figure*}[!ht]
	\begin{center}
		\centerline{\includegraphics[width=\columnwidth]{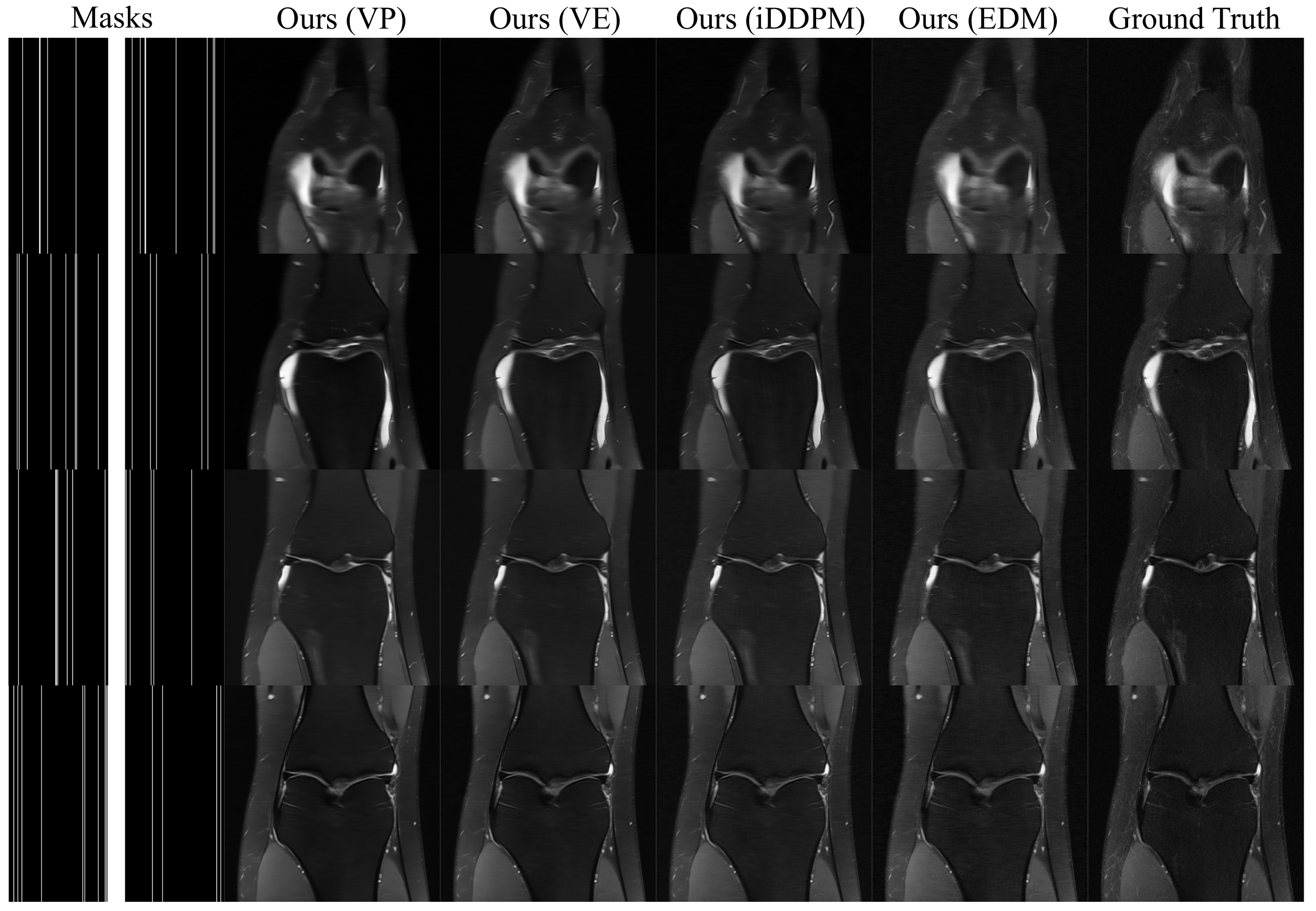}}
		\caption{Visual results of MRI with $8\times$ Uniform 1D sampling}
			\label{fig:mriu1d8}
		\end{center}
\end{figure*}

\end{document}